\documentclass[11pt]{article}

\usepackage[letterpaper,margin=1in]{geometry}
\usepackage[T1]{fontenc}
\usepackage[utf8]{inputenc}
\usepackage{bm}
\usepackage{graphicx} % Required for inserting images
\usepackage{amsmath}
\usepackage{amssymb}
\usepackage{amsfonts}
\usepackage{textcomp}
\usepackage{xcolor}
\usepackage{booktabs}
\usepackage{soul}
\usepackage{cancel}
\usepackage{quoting}
\usepackage[caption=false]{subfig}
\usepackage{amsthm}
\usepackage{listings}
\usepackage[hidelinks]{hyperref}

\allowdisplaybreaks % from amsmath package, allows multiline equations to break across pages (delete if not wanted)
\newtheorem{lemma}{Lemma}
\newcommand{\be}{\begin{equation}}
\newcommand{\ee}{\end{equation}}
\def\be#1\ee{\begin{equation}#1\end{equation}}
\newcommand{\beaw}{\begin{eqnarray} }
\newcommand{\eeaw}{\end{eqnarray}}
\def\beaw#1\eeaw{\begin{align}#1\end{align}}
\newcommand{\bea}{\begin{eqnarray*}}
\newcommand{\eea}{\end{eqnarray*}}

\newcommand{\tbt}[4]{
  \left[ \renewcommand{\arraystretch}{0.7}\begin{array}{cc}
       #1 & #2 \\ #3 & #4
   \end{array} \right] }

\newcommand{\tbo}[2]{
  \left[ \renewcommand{\arraystretch}{0.8} \begin{array}{c}
       #1 \\ #2
       \end{array} \right] }
\newcommand{\obt}[2]{
  \left[ \begin{array}{cc}
       #1 & #2
       \end{array} \right] }
		
\newcommand{\TbT}[9]{
  \left[ \renewcommand{\arraystretch}{.8} \begin{array}{ccc}
       #1 & #2 & #3 \\
		   #4 & #5 & #6 \\
			 #7 & #8 & #9
       \end{array} \right]}					
			
\newcommand{\Tbo}[3]{ \left[ \renewcommand{\arraystretch}{.8} \begin{array}{c} #1 \\ #2 \\ #3 \end{array} \right] }
\newcommand{\Fbo}[4]{ \left[ \begin{array}{c} #1 \\ [-0.35em] #2 \\ [-0.35em] #3 \\ [-0.35em] #4 \end{array} \right] }

	\newcommand{\ObF}[4]{
  \left[ \begin{array}{cccc}
       #1 & #2 & #3 & #4
       \end{array} \right] }

\newcommand{\TBT}[6]{
  \left[ 
	\renewcommand{\arraystretch}{1.15}
	\begin{array}{ccc}
       #1 & #2 & #3 \\
		   #4 & #5 & #6 
       \end{array} \right]}

\newcommand{\del}[1]{\ifmmode\cancel{#1}\else\st{#1}\fi}
\definecolor{codegray}{rgb}{0.96, 0.96, 0.96}
\definecolor{commentgreen}{rgb}{0.0, 0.6, 0.0}
\definecolor{keywordblue}{rgb}{0.0, 0.0, 0.8}
\definecolor{stringpurple}{rgb}{0.58, 0.0, 0.82}

\hypersetup{%
	pdfauthor={David J. N. Limebeer and Charl van de Merwe},
	pdftitle={The Flat Earth Error: Differential Geometry in Vehicle Dynamics},
	pdfkeywords={curvature, differential geometry, kinematic constraints, vehicle dynamics, optimal control},
	pdfsubject={Differential geometry in vehicle dynamics}
}

\title{The Flat Earth Error: Differential Geometry in Vehicle Dynamics}

\author{%
David J. N.\ Limebeer\thanks{\texttt{david.limebeer@wits.ac.za}}\\
Department of Electrical and Information Engineering\\
University of the Witwatersrand\\
Johannesburg, South Africa
\and
Charl van de Merwe\thanks{\texttt{charlvandemerwe1@students.wits.ac.za}}\\
Department of Electrical and Information Engineering\\
University of the Witwatersrand\\
Johannesburg, South Africa
}

\date{}

\begin{document}

\maketitle

\begin{abstract}
The widespread idealization of road surfaces as horizontal planes can introduce significant inaccuracies into vehicle dynamics simulations, a phenomenon termed the ``Flat Earth Error.'' This article provides an expository guide to the use of classical differential geometry to model vehicular motion on curved surfaces. The influence of road curvature is characterized using the metric tensor, the second fundamental form, the shape operator, and the Christoffel symbols. Reproducible MATLAB scripts using an elliptic cone as a benchmark example illustrate the construction of geodesic curves and the simulation of particle dynamics on curved surfaces. The resulting geometric structures are integrated into a single-track vehicle-and-track model within a pseudospectral optimal control framework. Trajectory optimization results over a high-density mobile LiDAR profile of Darlington Raceway are used to generate a high-fidelity road-surface model. This model is used within an hp-adaptive collocation framework to investigate minimum lap time optimal control vehicular trajectories. These computations capture the non-smooth traction saturation limits of the tyres alongside position-dependent variations in gravitational forcing. Integrating differential geometry, multibody mechanics, and optimal control is essential for high-fidelity driven and autonomous vehicle-dynamics simulations. The optimized velocity and tyre slip profiles show that real-world racing track geometries induce dynamically significant three-dimensional effects. These results are particularly relevant to performance-limited simulations on highly banked track surfaces such as NASCAR ovals.
\end{abstract}

\smallskip
\noindent\textbf{Keywords:} Curvature, differential geometry, kinematic constraints, vehicle dynamics, optimal control.

\section*{Nomenclature} \label{sec:Nomenclature}

\begin{tabbing}
\hspace*{1.8cm} \= \kill % Sets the column width for the symbols

$a, b$ \> Elliptic cone semi-axes coordinates (m) \\
$c$ \> Cone slope parameter (dimensionless) \\
$E, F, G$ \> Components of the First Fundamental Form (dimensionless) \\
$F_s, F_n$ \> Generalized forces acting along curvilinear track coordinates (N) \\
${\bm g}$ \> Acceleration due to gravity ($\text{m/s}^2$); bold type is used for vector quantities \\
$g_c$ \> The gravity scalar constant $|\bm g|$ ($\text{m/s}^2$) \\
$g_{ij}$ \> Covariant components of the surface metric tensor (dimensionless) \\
$h$ \> Height of the vehicle's mass centre above the road (m) \\
$I_y$ \> Car's pitch moment of inertia ($\text{kg}\cdot\text{m}^2$) \\
$I_z$ \> Car's yaw moment of inertia ($\text{kg}\cdot\text{m}^2$) \\
$m$ \> Mass of the vehicle (kg) \\
$L, M, N$ \> Components of the matrix in the second fundamental form (\text{m}$^{-1}$)\\
$q^i$ \> Generalized surface manifold coordinates $q^1$ and $q^2$ (m) \\
${\bm p}$ \> Position vector in 3D space (m) \\
$s, n$ \> Local orthogonal curvilinear track coordinates along the road surface (m) \\
$U, V$ \> Isometric Cartesian coordinates in the 2D developed flat plane (m) \\
$z$ \> Extrinsic vertical elevation coordinate in Euclidean space (m) \\
\\
\textbf{Greek Letters} \\
$\alpha_f, \alpha_r$ \> Front and rear tyre slip angles (rad) \\
$\delta$ \> Vehicle steering angle (rad) \\
$\Gamma^k_{ij}$ \> Christoffel symbols of the second kind (connection coefficients) (\text{m}$^{-1}$) \\
$\theta$ \> Physical azimuthal tracking angle on the cone manifold (rad) \\
$\phi$ \> Unfolded sector configuration angle in the 2D developed plane (rad) \\
$\phi_{\text{total}}$ \> Total isometric angular span of the developed cone sector (rad) \\
$\chi$ \> Heading/slip angle between $\mathbf{e}_x^b$ and $\partial_s\mathbf{p}$ (rad)
\end{tabbing}

\section{Introduction}
\label{sec:intro}

\subsection{Historical Background} \label{subsec:historical_lineage}
The geometry of the Earth's surface was established in antiquity, with a spherical Earth proposed by Pythagoras. The first physical proof was based on Aristotle's observation of the Earth's shadow on the Moon. Circa 250\,BC, and with remarkable accuracy, Eratosthenes calculated the Earth's radius using the angles of sun shadows in different cities. Copernicus proposed the heliocentric model in the sixteenth century and tactically linked his opponents' arguments to the Flat-Earth theory as a way to discredit them~\cite{Russel1991}. 
The seed of the ``Flat-Earth error'' was thus planted, but it did not grow to choke the truth until the publication of Washington Irving's fictionalized best-selling biography of Columbus~\cite{Irving1828}. The Flat-Earth myth gained further momentum towards the end of the nineteenth century when Flammarion's flat Earth engraving appeared in his book \emph{L'Atmosph\`ere:\,M\'et\'eorologie}; see Figure\,\ref{fig:map}.
\begin{figure}[ht] 
    \begin{center}
        \includegraphics[width=0.6\textwidth]{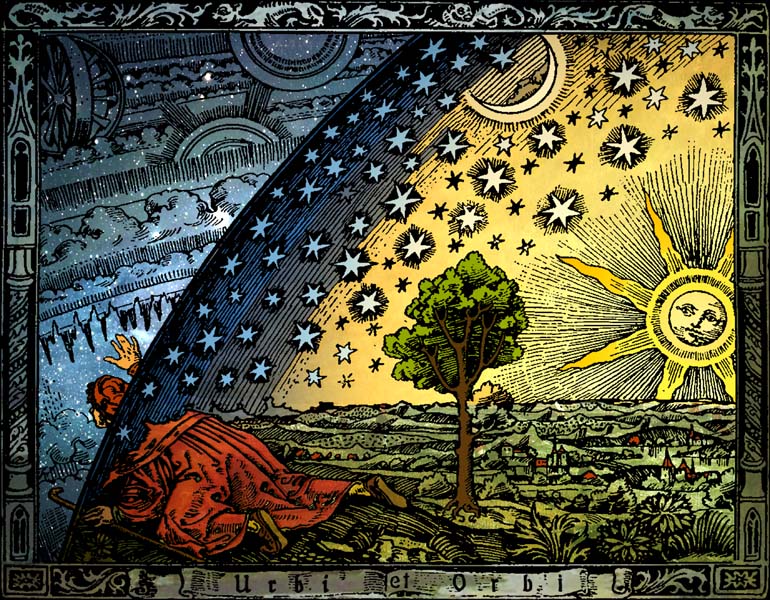}
        \caption{The point where the sky and the Earth touch (from Flammarion's 1888 engraving).}
        \label{fig:map}
    \end{center}
\end{figure}
The Flat-Earth myth has proven remarkably resilient, surviving as a ``fact'' in school textbooks and popular culture long after it had been debunked by historians.

\subsection{The Flat-Earth Fraternity and Vehicle Simulation}
For various reasons, the creators of many vehicle simulation models still find common cause with the Flat-Earth fraternity. In these models, the vehicle's motion is constrained to a horizontal plane with gravity always 
perpendicular to it. 
In Flat-Earth models the vehicle's potential energy is deemed constant, while undulating (curved) surfaces produce a dynamic exchange between the vehicle's kinetic and potential energy. Other influences include 
the ever-changing direction of gravity, the reduction in the tyres' force-generating capacity as a vehicle negotiates the summit of a hill, or their increased capability when a vehicle drives through a dip. Centripetal forces act towards the centre of curvature and perpendicular to the vehicle's velocity;
this is a 3D phenomenon. Similar effects occur with cambered corners, with adverse camber particularly troublesome at high speed. These differential-geometric influences make curved-road-surface vehicle models both more complex and more interesting. 

Flat-Earth road models are simple, well-understood, computationally efficient, and sufficient for many driven and autonomous vehicle applications. Early flirtations with 3D road surfaces focus on straight, undulating roads~\cite{Roos1997}. It was recently observed that vehicle models built on a level terrain assumption are inadequate for the capture of rolling or pitching dynamics---the roll-over of such vehicles is a real practical risk~\cite{Badar2024}. A reasonably up-to-date survey of road modelling is presented in~\cite{Limebeer2023}, with early signs of the full power of differential geometry becoming evident in~\cite{Fork2021,LimebeerVSD2021,Fork2024,Limebeer2025}. The purpose of this paper is to accelerate the adoption of ideas from classical differential geometry that are expertly described in several books including~\cite{DoCarmo1976,Shifrin2006}.

\subsection{The origins of curvature}
It would be wrong to say that the Flat-Earth error is ubiquitous in the vehicle dynamics community, but it is nonetheless a widely adopted idealisation that is arguably overused. In order to disaggregate some of the issues related to road modelling, we consider Figure\,\ref{fig:earth} which shows the Earth with a plane ${\mathcal T}_{\mathcal M}^P$ tangent to the surface at point ${\bm p}(s,n)$.
\begin{figure}[ht] 
    \begin{center}
        \includegraphics[trim=40mm 160mm 40mm 50mm, clip, width=0.6\textwidth]{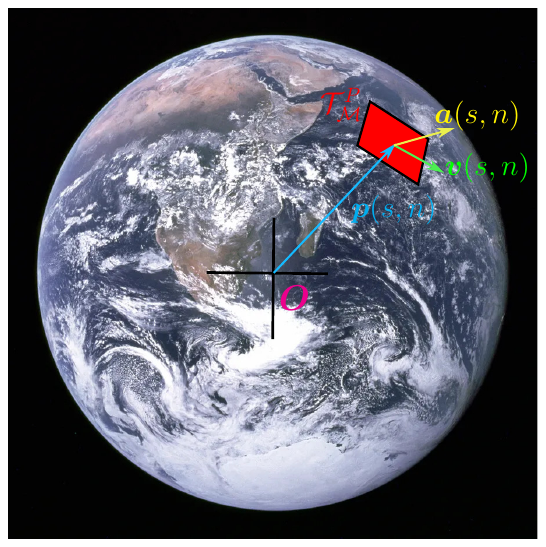}
        \caption{The Earth with tangent plane ${\mathcal T}_{\mathcal M}^P$ at position ${\bm p}(s,n)$, together with velocity ${\bm v}(s,n)$ and acceleration ${\bm a}(s,n)$ vectors. The origin of the coordinate system is ${\bm O}$.}
        \label{fig:earth}
    \end{center}
\end{figure}
This localized path position point is given by:
\be
    {\bm p}(s,\,n) = x(s,n){\bm e}_x + y(s,n){\bm e}_y + z(s,n){\bm e}_z \label{eq:PP}
\ee
in Cartesian coordinates. Although the Earth's surface is three dimensional, ${\bm p}(s,n)$ is confined to the Earth's surface and so only two independent variables are required to describe it (surfaces in ${\mathbb R}^3$ are intrinsically two-dimensional). As an example, $s$ may be defined as the longitude, and $n$ the latitude of ${\bm p}(s,\,n)$.
The Cartesian frame has origin ${\bm O}$ and Earth-fixed basis vectors ${\bm e}_x$, ${\bm e}_y$, and ${\bm e}_z$. Equation~\eqref{eq:PP} is an extrinsic description of ${\bm p}(s,\,n)$, because reference is made to the embedded space $\mathbb{R}^3$. It follows from the chain rule that the velocity of ${\bm p}(s,n)$ is given by:
\begin{equation}
    {\bm v}(s,n) = \dot{\bm p}(s,n) = \dot{s} \partial_s {\bm p}(s,n) + \dot{n} \partial_n {\bm p}(s,n) \label{eq:vel}
\end{equation}
where the vector partial derivatives $\partial_s {\bm p}(s,n)$ and $\partial_n {\bm p}(s,n)$ are assumed independent. In general, the velocity vector ${\bm v}(s,n)$ is in the tangent plane ${\mathcal T}_{\mathcal M}^P$, but not in the manifold ${\mathcal M}$ (the Earth's surface). The acceleration vector ${\bm a}(s,n)$ is typically in neither. 

The partial derivative operators $\partial_s (\cdot)$ and $\partial_n (\cdot)$, which make no reference to ${\bm p}(s,n)$, can be considered vectors in ${\mathcal T}_{\mathcal M}^P$ with intrinsic descriptions. Intrinsic descriptions of this type are required when there is no external coordinate system and no point of origin (such as the universe studied in General Relativity). In early antiquity, an observer located at ${\bm p}(s,n)$ would consider the Earth's surface synonymous with the tangent plane ${\mathcal T}_{\mathcal M}^P$---there was no notion of an underlying spherical manifold $\mathcal M$.

\begin{quotation}
    The question is: ``Can one assume that the road surface is tangent to the Earth's surface, so that it can be treated as a resident of ${\mathcal{T}}_{\mathcal M}^P$ with gravity always acting in the normal direction, or is the Earth's surface ``wrinkled'' so that ${\mathcal{T}}_{\mathcal M}^P$ is position-dependent, far smaller, and only supportive of the vehicle's wheels, for example, and with the direction of gravity forever changing?'' This difference is the essence of the ``Flat-Earth'' error.
\end{quotation}

\subsection{Outline and Document Organization}
We will provide an account of the differential geometry underlying the second possibility. 
Section\,\ref{sec:Curved} introduces the mathematical tools needed to measure lengths and angles on curved surfaces, and provides a precise characterization of curvature.
A method for decomposing acceleration into tangential and normal components is provided in Section\,\ref{sec:Gauss}.
Geodesic curves are also described, which are curves along which a vehicle will travel if lateral tyre forces are not allowed---as if the vehicle is travelling on ice and without the influence of gravity. Geodesics are analogous to straight lines in the plane. 
The geodesics of the elliptic cone are analysed in Section\,\ref{sec:Elliptic},
while the dynamics of a unit-mass particle is studied in Section\,\ref{sec:Dynamics}.
Elliptic conical track surfaces have been used in a vehicular context in~\cite{Limebeer2025B}; the interior surface of elliptic cones is reminiscent of some NASCAR ovals. Section\,\ref{sec:VehicleDynamics} demonstrates how the geometric framework developed in this paper can be embedded in practical vehicle dynamics simulations.
A simple single-track vehicle model is described in the Section\,\ref{sec:singletrack}, which illustrates how road curvature may be introduced into an integrated system model.
A representation vehicular optimal control problem (OCP) is described in Section\,\ref{sec:OCprob}, and conclusions drawn in Section\,\ref{sec:conclusions}.
The appendices contain computer codes, and vehicle, tyre and numerical optimal control setup parameters.
	
\section{Curved surfaces} \label{sec:Curved}
In order to highlight the role of `surface curvature' in dynamics problems, the dynamics of a unit mass on a curved surface will be considered,
which is an exemplar problem in physics \cite{Villarreal2018}.
The key concepts, which will now be introduced, are the {\em metric tensor} or {\em first fundamental form}, the {\em second fundamental form}, and the {\em shape pencil} with its {\em principal curvatures} and {\em principal curvature directions}, and the {\em Christoffel symbols}.

\subsection{Measuring distance}
In standard three-dimensional Euclidean space, measuring the distance $dl$ between two points is done using Pythagoras' theorem; 
the associated distance is given by $dl^2 = dx^2+dy^2+dz^2$.
Implicit in this calculation is the understanding that `distance' is measured along straight lines
on a rectangular grid.
The {\em metric tensor} in this case is a $3 \times 3$ identity matrix ${g}_{ij}$ that allows us to write:
\be
dl^2 = \sum_{i=1}^3 \sum_{j=1}^3 {g}_{ij} dx^i dx^j
\ee
or, using the Einstein summation convention, $dl^2 = {g}_{ij} dx^i dx^j$,
in which the $dx^i$s represent $dx$, $dy$ and $dz$. Inner products are calculated as simply ${\bm u} \cdot {\bm v} = u^i v^i$.

A more interesting example comes from special relativity, where distances between events in spacetime are measured using the Minkowski metric (in the mostly negative sign convention): 
\[
dl^2 = \ObF{dt}{dx}{dy}{dz}
\left[ \begin{array}{cccc}
       c^2 & 0 & 0 & 0 \\[-0.35em]
		   0 & -1 & 0 & 0 \\[-0.35em]
			 0 & 0 & -1 & 0 \\[-0.35em]
			 0 & 0 & 0 & -1
       \end{array} \right]
\Fbo{dt}{dx}{dy}{dz} = {g}_{ij} dx^i dx^j,
\]
in which a time term $dt$ is added to the (negated) Cartesian distance. This switch from the positive-definite Euclidean geometry to a sign indefinite metric is not just a mathematical oddity, it is the mechanism that embeds the speed of light as a universal cosmic speed limit.
For physically realisable trajectories in spacetime, involving massive objects, this length metric must be positive and so $(dt)^2(c^2 - (\frac{dx}{dt})^2 - (\frac{dy}{dt})^2 - (\frac{dz}{dt})^2) > 0$. 
When $dl^2 =0$ the separation path matches the trajectory of massless particles like light.
When $dl^2 <0$ the distance between two events is so far apart that no signal can travel fast enough to connect them.
In the general case, the 
inner product becomes ${\bm u} \cdot {\bm v} = {g}_{ij} du^i dv^j$, which is different from the more familiar ${\bm u} \cdot {\bm v} = u^i v^i$.

When surfaces are curved, the shortest distances between points are not normally measured along straight lines, but along geodesics (in the case of a sphere these are great circles\textemdash circles that share their centre with that of the sphere). Pythagoras' theorem no longer applies, and the metric tensor need not be diagonal. The role of the metric tensor is to define the inner product between vectors and, consequently, distances and angles.

As we have seen, the position of a point on the Earth's surface can be described by \eqref{eq:PP}, with the velocity of this point described by \eqref{eq:vel}. Both $\partial_s {\bm p}$ and $\partial_n {\bm p}$ are vectors that define the tangent plane ${\mathcal T}_{\mathcal M}^P$. To minimize notational clutter, the explicit dependence of these vectors on $s$ and $n$ is omitted from now on.
In this framework the point's motion is confined to the manifold ${\mathcal M}$,
and there is no need to introduce surface-constraining Lagrange multipliers in dynamical analyses.

From (\ref{eq:vel}), the magnitude squared of $\bm v$ is given by
\[
	\|{\bm v}\|^2_2 = \bm v \cdot \bm v = (\dot s \partial_s \bm p + \dot n \partial_n \bm p) \cdot (\dot s \partial_s \bm p + \dot n \partial_n \bm p)
\]	
which in matrix form, becomes
\begin{equation}\label{eq:KE}
\begin{aligned}
\|{\bm v}\|^2_2 &=
\begin{bmatrix}\dot s & \dot n\end{bmatrix}
\begin{bmatrix}
\partial_s {\bm p} \cdot \partial_s {\bm p} & \partial_s {\bm p} \cdot \partial_n {\bm p} \\
\partial_n {\bm p} \cdot \partial_s {\bm p} & \partial_n {\bm p} \cdot \partial_n {\bm p}
\end{bmatrix}
\begin{bmatrix}\dot s \\\dot n\end{bmatrix} \\
&=
\begin{bmatrix}\dot s & \dot n\end{bmatrix}
\underbrace{\begin{bmatrix} E & F \\ F & G \end{bmatrix}}_{I_P(\cdot,\cdot)}
\begin{bmatrix}\dot s \\\dot n\end{bmatrix}
\end{aligned}
\end{equation}
in which $I_P(\cdot,\cdot)$ is the metric tensor associated with the first fundamental form.
This first fundamental form is a bilinear form representing the inner product of two vectors in the tangent plane ${\mathcal T}_P{\mathcal M}$ with respect to the basis vectors $\partial_s {\bm p}$  and $\partial_n {\bm p}$. 
That is ${\bm u} \cdot {\bm v} = I_P({\bm u},{\bm v})$, while
$\|{\bm u}\|^2=I_P({\bm u},\,{\bm u})$ is a local measure of `distance' on the curved surface.

The distance travelled on surface $\mathcal M$ in time $T$ is given by
\be 
L = \int_0^T \sqrt{\obt{\dot s}{\dot n} I_P \tbo{\dot s}{\dot n}} dt.
\ee

For arbitrary basis vectors ${\bm e}_1$ and ${\bm e}_2$, we have
$E = {\bm e}_1 \cdot {\bm e}_1$, $F = {\bm e}_1 \cdot {\bm e}_2$, and $G = {\bm e}_2 \cdot {\bm e}_2$, with
\be
{\bm u} \cdot {\bm w} = \obt{u_1}{u_2}\tbt{E}{F}{F}{G} \tbo{w_1}{w_2} = g_{ij}u^iw^j,
\ee
so that
\be
\cos \gamma = \frac{{\bm u} \cdot {\bm w}}{
( {\bm u} \cdot {\bm u})^\frac{1}{2}
( {\bm w} \cdot {\bm w})^\frac{1}{2}} = \frac{g_{ij}u^iw^j}{\sqrt{g_{ij}u^iu^j}\sqrt{g_{ij}w^iw^j}}
\ee
is the angle between $\bm u$ and $\bm w$.
The metric tensor is not only about measuring distances\textemdash it's a fundamental intrinsic geometric structure that defines both lengths and angles measured on curved surfaces. In vehicular applications, the metric tensor can be determined using a mesh of road-surface measurements such as a LiDAR map \cite{Limebeer2025}.

\subsection{Measuring curvature}
We begin our discussion of curvature by considering the special case of the saddle given in Figure\,\ref{fig:saddle}, which is described in Monge coordinates by ${\bm p}(s,n)=[s,\,n,\,s^2 - n^2]$. The tangent plane ${\mathcal T}_{\mathcal M}^P$ at $\bm p$ is spanned by the partial derivatives 
$\partial_s {\bm p} = [1,\,0,\,2s]$ and $\partial_n {\bm p} = [0,\,1,\,-2n]$.
The unit normal to ${\mathcal T}_{\mathcal M}^P$ is given by
\be
{\mathcal N}_P
 = \frac{\partial_s {\bm p} \times \partial_n {\bm p}}{\| \partial_s {\bm p}
\times \partial_n {\bm p} \|}= \frac{(-2s,\,2n\,\,1)}{\sqrt{4s^2+4n^2+1}}.
\ee
The unit normal at the origin for the surface in Figure\,\ref{fig:saddle} is seen as a red arrow.
%The unit normal ${\mathcal N}_P$ is perpendicular to both $\partial_s {\bm p}$ and $\partial_n {\bm p}$.

\begin{figure}[ht] 
	\begin{center}
		\includegraphics[trim=50mm 85mm 50mm 90mm, clip, width=0.5\textwidth]{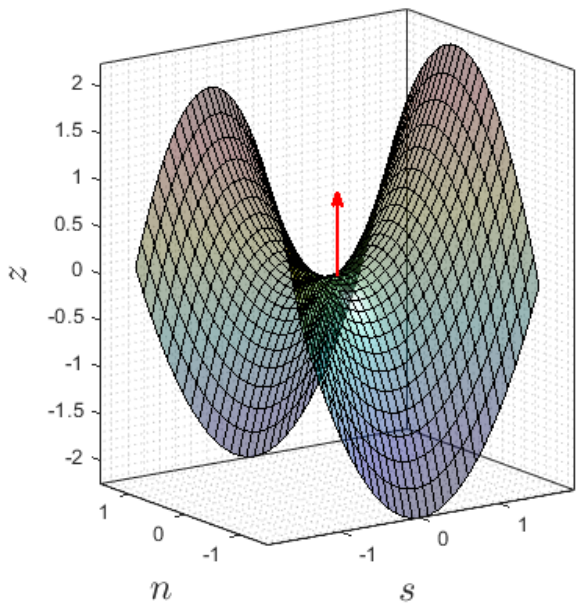}
		\put(-120,150) {${\mathcal N}_P$}
		\caption{Surface of the saddle $z = s^2 - n^2$ with a unit normal ${\mathcal N}_P$ depicted at the origin.}
		\label{fig:saddle}
	\end{center}
\end{figure}

As one moves along the line $(s,0)$ with $s$ increasing, the surface curves upwards, while along $(0,n)$ it curves downwards. This leads to several subsidiary observations: (i) the curvature of a surface is direction-dependent; 
(ii) curvature has a sign\textemdash the curvature of a sphere is deemed to be positive, while negative curvature is often associated with saddles; (iii) in one notion of curvature, the Gaussian curvature, zero curvature means that the surface is {\em developable}, or can be `flattened out'. For example, a cylinder or cone can be `unrolled' onto a plane without tearing or stretching; a sphere cannot be so developed.
It is also clear from the figure that as one moves around the surface, the normal ${\mathcal N}_P$ changes direction. If $\mathcal M$ is planar:
$
{\bm p}(s,n) = (s,\,n,\,as+bn),
$
and the surface normal ${\mathcal N}_P = (-a,\,-b,\,1)$ is independent of position.
Thus $\partial_s {\mathcal N}_P=0$ and $\partial_{n} {\mathcal N}_P=0$ in this particular case. 
On a plane, ${\mathcal N}_P$ is constant\textemdash this suggests that the rate of change of ${\mathcal N}_P$ can be used to measure curvature.  

An important idea in local surface theory is the {\em Gauss map}, which is a mapping from the ${\bm p}(s,n)$ induced surface to a surface induced by ${\mathcal N}_P$; the image of the Gauss map is a unit sphere. An example of the Gauss map for the saddle is shown in Figure\,\ref{fig:Gauss}.
Several facts are immediate: (i) Since ${\mathcal N}_P$ is normal to ${\mathcal T}_P{\mathcal M}$, it is also normal to the plane ${\mathcal T}_P{\mathcal G}$, which is tangent to the Gauss sphere. 
If the manifold ${\mathcal M}$ is planar, the point ${\bm n}_P$ on the Gauss sphere remains stationary wherever and whenever $\bm p$ moves.
If one moves in a small circle on ${\mathcal T}_P{\mathcal M}$, the normal associated with this movement will change direction tracing out an ellipse on the Gauss circle.
In some cases the direction of rotation around the ellipse on the Gauss map remains the same, sometimes the ellipse collapses into a line, or even a point, and sometimes it reverses direction.

\begin{figure*}[ht] 
	\begin{center}
		\includegraphics[trim=10mm 90mm 0mm 90mm, clip, width=.8\textwidth]{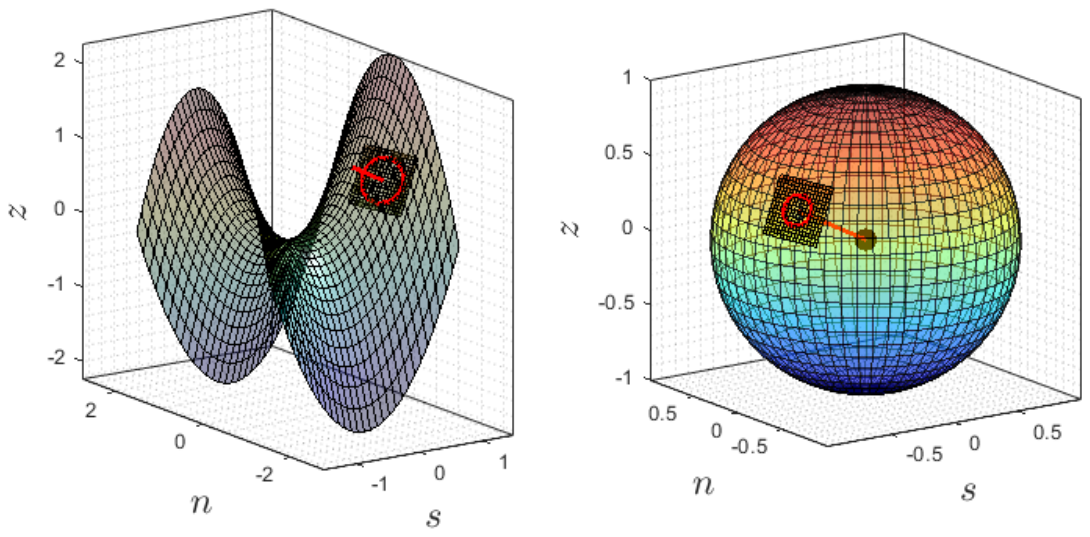}
		\caption{Surface of the saddle $z = s^2 - n^2$ with its corresponding Gauss map (a map from $(s,n)$ to the unit normal ${\mathcal N}_P$).}
		\label{fig:Gauss}
	\end{center}
\end{figure*}

Since ${\mathcal N}_P \cdot {\mathcal N}_P = 1$, it follows from the product rule that $\partial_s {\mathcal N}_P \cdot {\mathcal N}_P = 0$ and $\partial_n {\mathcal N}_P \cdot {\mathcal N}_P = 0$. 
Thus $\partial_s {\mathcal N}_P$ and $\partial_n {\mathcal N}_P$ must lie in ${\mathcal T}_{\mathcal M}^P$
and are expressible as linear combinations of $\partial_s {\bm p}$ and $\partial_n {\bm p}$. That is
\be
\partial _s {\mathcal N}_P= p \partial_s {\bm p} + q \partial_n {\bm p} \mbox{~~~~and~~~~} \partial _n {\mathcal N}_P= r \partial_s {\bm p} + s \partial_n {\bm p} \label{eq:Wein}
\ee

Differentiating $\partial_s {\bm p} \cdot {\mathcal N}_P=0$ and $\partial_n {\bm p} \cdot {\mathcal N}_P=0$ gives
\beaw
\partial_s(\partial_s {\bm p} \cdot {\mathcal N}_P) &= \partial_{ss}{\bm p} \cdot {\mathcal N}_P +  \partial_s {\bm p} \cdot \partial_s{\mathcal N}_P=0 ~\Rightarrow \partial_s {\bm p} \cdot \partial_s{\mathcal N}_P ~= -\partial_{ss}{\bm p}   \cdot {\mathcal N}_P \nonumber \\[-2pt] 
\partial_n(\partial_s {\bm p} \cdot {\mathcal N}_P) &= \partial_{ns} {\bm p}  \cdot {\mathcal N}_P +  \partial_s {\bm p} \cdot \partial_n{\mathcal N}_P=0 \,\Rightarrow \partial_s {\bm p} \cdot \partial_n{\mathcal N}_P \,= -\partial_{ns}  {\bm p} \cdot {\mathcal N}_P \nonumber \\[-2pt]
\partial_s(\partial_n {\bm p} \cdot {\mathcal N}_P) &= \partial_{sn} {\bm p}  \cdot {\mathcal N}_P +  \partial_n {\bm p} \cdot \partial_s{\mathcal N}_P=0 \,\Rightarrow \partial_n {\bm p} \cdot \partial_s{\mathcal N}_P \,= -\partial_{sn} {\bm p}  \cdot {\mathcal N}_P \nonumber \\[-2pt]
\partial_n(\partial_n {\bm p} \cdot {\mathcal N}_P) &= \partial_{nn} {\bm p}  \cdot {\mathcal N}_P +  \partial_n {\bm p} \cdot \partial_n{\mathcal N}_P=0 \Rightarrow \partial_n {\bm p} \cdot \partial_n{\mathcal N}_P = -\partial_{nn} {\bm p}  \cdot {\mathcal N}_P. \label{eq:2OPD}
\eeaw
The matrix of projections of the second-order derivative of $\bm p$ onto ${\mathcal N}_P$, is the
matrix associated with the Second Fundamental Form:
\be
\tbt{L}{M}{M}{N} = \tbt{\partial_{ss}{\bm p}   \cdot {\mathcal N}_P}{\partial_{sn}{\bm p}   \cdot {\mathcal N}_P}{\partial_{ns}{\bm p}   \cdot {\mathcal N}_P}{\partial_{nn}{\bm p}   \cdot {\mathcal N}_P}. \label{eq:SFF}
\ee
The second fundamental form $\Pi({\bm v},{\bm w})$ measures extrinsic curvature\textemdash how the surface curves away from its tangent plane.
Note: $\partial_n {\bm p} \cdot \partial_s {\mathcal N}_P = \partial_s {\bm p} \cdot \partial_n {\mathcal N}_P = -M$, by the symmetry of mixed partial derivatives.

One interpretation of the entries of the second fundamental form matrix is that they show how the surface normal bends away from ${\mathcal T}_P{\mathcal M}$ as $\bm p$ moves. 
To show this explicitly, suppose
\be
	\bm w = a\,\partial_s \bm p + b\,\partial_n \bm p, \qquad
	\bm v = c\,\partial_s \bm p + d\,\partial_n \bm p.
\ee
The directional derivative in the $\bm w$ direction is
\be
	\nabla_{\bm w} \mathcal N_P = a\,\partial_s \mathcal N_P + b\,\partial_n \mathcal N_P.
\ee
Thus
\beaw
	-(\nabla_{\bm w} \mathcal N_P)\cdot \bm v
	&= -\big(a\,\partial_s \mathcal N_P + b\,\partial_n \mathcal N_P\big)
	\cdot \big(c\,\partial_s \bm p + d\,\partial_n \bm p\big) \nonumber \\
	&= \obt{a}{b} \tbt{\partial_s {\bm p} \cdot \partial_s{\mathcal N}_P}
	{\partial_s {\bm p} \cdot \partial_n{\mathcal N}_P}
	{\partial_n {\bm p} \cdot \partial_s{\mathcal N}_P}
	{\partial_n {\bm p} \cdot \partial_n{\mathcal N}_P}  \tbo{c}{d}\nonumber \\
	&= \obt{a}{b} \tbt{L}{M}{M}{N} \tbo{c}{d} \label{eq:DirDer}
\eeaw
using \eqref{eq:2OPD} and \eqref{eq:SFF}.
To sum up so far, we began with a surface $\mathcal M$, with tangent vectors $\partial_s{\bm p}$ and $\partial_n {\bm p}$ spanning a plane ${\mathcal T}_{\mathcal M}^P$ (tangent to the surface at point $\bm p$), which has a
unit normal ${\mathcal N}_P$ (at $\bm p$). 
We then showed that the first-order derivatives of the normal also lie in the tangent plane, so that they are expressible as a linear combination of  $\partial_s{\bm p}$ and $\partial_n{\bm p}$.
Differentiating the orthogonality conditions $\partial_s {\bm p} \cdot {\mathcal N}_P=0$ and $\partial_n {\bm p} \cdot {\mathcal N}_P=0$ leads to the second fundamental form (with matrix coefficients $L$, $M$, and $N$). Some of the properties of surface curvature were outlined using intuitive arguments that we will now make concrete.

\subsection{The Weingarten equations and the shape pencil}
The Weingarten equations come from expressing $\partial_s{\mathcal N}_P$
and $\partial_n{\mathcal N}_P$ as linear combinations of $\partial_s{\bm p}$ and $\partial_n{\bm p}$; recall \eqref{eq:Wein}. 
Alternatively, recalling \eqref{eq:2OPD}, these equations come from
projecting the second-order partial derivatives $\partial_{ss}{\bm p}$, $\partial_{sn}{\bm p}$, and $\partial_{nn}{\bm p}$ onto ${\mathcal N}_P$.
That is
\beaw
\partial_{ss}{\bm p}   \cdot {\mathcal N}_P &= -(pE+qF) \nonumber \\
\partial_{sn}{\bm p}   \cdot {\mathcal N}_P &= -(pF+qG) \nonumber \\ 
\partial_{ns}{\bm p}   \cdot {\mathcal N}_P &= -(rE+sF) \nonumber \\
\partial_{nn}{\bm p}   \cdot {\mathcal N}_P &= -(rF+Gs) ,
\eeaw
which can be re-packaged in matrix form:
\be
\tbt{E}{F}{F}{G} \tbt{p}{r}{q}{s} = -\tbt{L}{M}{M}{N}.
\ee
The matrix
\be
S = \tbt{p}{r}{q}{s} = -I_P^{-1} \Pi_P \label{eq:Shape}
\ee
is a matrix representation of the {\em shape operator} or the {\em Weingarten map}.
The columns of $S$ give the tangent-plane coordinates of $\partial_s {\mathcal N}_P$
and $\partial_n {\mathcal N}_P$, which encode how the surface normal rotates
as one moves across the surface. 

In the case of the cylinder of radius $\rho$ shown in Figure\,\ref{fig:cylinder}, with $n$  measured circumferentially, and $s$ longitudinally,
the normal vector ${\mathcal N}_P$ does not change direction when moving in the $s$ direction. 
When moving in the $n$ direction, however, the rotational change of direction of $\partial_s {\mathcal N}_P$ is inversely proportional to the radius of the cylinder and the shape operator takes the form:
\[
\tbo{\partial_s {\mathcal N}_P}{\partial_n {\mathcal N}_P} = \tbt{0}{0}{0}{-\frac{1}{\rho}} \tbo{\partial_s {\bm p}}{\partial_n {\bm p}}.
\]
The negative sign is due to the fact that $\partial_n {\mathcal N}_P \propto -1/\rho$.
A sphere of radius $\rho$ has $S = (-1/\rho)I$, because the normal tilts
uniformly in every tangent direction. A general surface 
combines the two tangent directions that are dependent on the eigen-structure of $S$.

\begin{figure}[ht] 
	\begin{center}
		\includegraphics[trim=10mm 100mm 0mm 90mm, clip, width=0.6\textwidth]{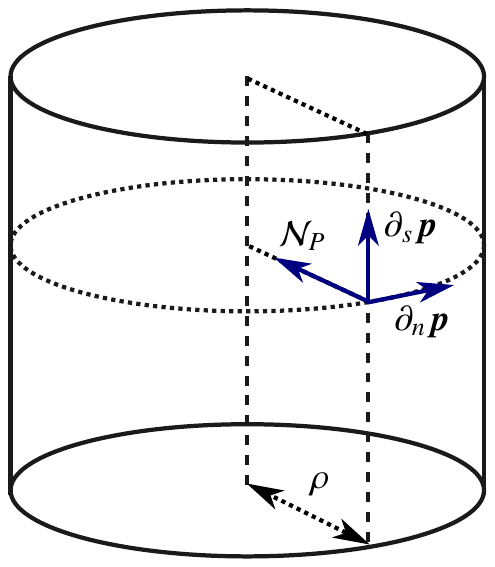}
		\caption{The normal changing direction with the circumnavigation of the surface of a cylinder.
			The variable $n$ is measured circumferentially, while $s$ is measured longitudinally.}
		\label{fig:cylinder}
	\end{center}
\end{figure}

\subsubsection{The shape pencil}
The coefficient matrices of the first and second fundamental forms constitute a positive-definite symmetric pencil of matrices. This characterization comes from the fact that the metric tensor is symmetric positive definite, while the matrix associated with the second fundamental form is symmetric.
In combination they form a {\em positive symmetric matrix pencil}
\be
	{\mathcal S}(\lambda) = \lambda I_P - \Pi_P;
	\label{eq:ShapePencil}
\ee
we call this the {\em shape pencil} (or {\em Weingarten pencil}).
The name `matrix pencil' is a long-standing mathematical metaphor in which a collection of geometric, or algebraic objects is `bundled' together, like the bristles of a brush that converge to the point (of a pencil).
The term was adapted into linear algebra to describe a combination of two matrices, typically written as $L(\lambda) = A + \lambda B$ where $\lambda$ is a scalar parameter \cite{GantmacherV2}. Matrix pencils have eigenvalues and eigenvectors, since $(\lambda I_P - \Pi_P){\bm w}=0$ has solutions for the eigenvalues $\lambda$ and eigenvectors ${\bm w}$. 
Positive definite symmetric pencils have real eigenvalues with their corresponding eigenvectors orthogonal in the metric tensor. See \cite{Parlett1998} Theorem 15.3.3. The eigenvalues are the {\em principal curvatures} and the eigenvectors are the
{\em principal curvature directions}\textemdash the tangent directions along which the surface
curvature is most and least.
One can also analyse the problem ${\bm v}^T(\lambda I_P - \Pi_P)=0$ in which the ${\bm v}$'s are the dual eigenvectors. The eigenvectors and dual eigenvectors are important when characterising ${\mathcal S}(\lambda)({\bm v},{\bm w})$ as a bilinear form. 

\subsubsection{Connection to the directional derivative}
The bilinear form associated with the pencil is %the family of bilinear forms defined by the linear combination:
\be
{\mathcal S}(\lambda) ({\bm v},{\bm w}) = \lambda I_P ({\bm v},{\bm w}) - \Pi_P({\bm v},{\bm w}).
\ee
The pairs $(\lambda, {\bm w})$ such that ${\mathcal S}(\lambda) ({\bm v},{\bm w}) = 0$ for all ${\bm v}$  are the principal curvatures and the principal curvature directions. 
Using the directional-derivative representation \eqref{eq:DirDer},
the condition ${\mathcal S}(\lambda)({\bm v},{\bm w}) = 0$ becomes
\begin{equation}
	-(\nabla_{\bm w} {\mathcal N}_P)\cdot{\bm v} = \lambda\, I_P({\bm v},{\bm w}),
\end{equation}
or equivalently $(\nabla_{\bm w} {\mathcal N}_P + \lambda\,{\bm w})\cdot{\bm v} = 0$; note that the dot represents the metric-induced inner product (${\bm v} \cdot {\bm w} = I_P({\bm v},{\bm w})$). Since $\bm v$ is in the tangent plane, but is otherwise arbitrary, the vector $(\nabla_{\bm w} {\mathcal N}_P + \lambda {\bm w})$ must be either zero, or in the normal direction.
The first option must hold, because $(\nabla_{\bm w} {\mathcal N}_P + \lambda {\bm w})$ is in the tangent plane.
Thus
\be
(\nabla_{\bm w} {\mathcal N}_P) = - \lambda {\bm w}. \label{eq:PrincipalDirEq}
\ee
If ${\bm w}$ is a principal direction with principal curvature $\lambda$, moving in the ${\bm w}$ direction causes the surface normal to tilt at a rate $\lambda$ in the same direction.
A positive $\lambda$ means that the normal tips in the direction of travel.
The two principal directions are orthogonal in the surface metric, so they form a
natural curvature-aligned frame at each point on the surface.

\subsection{Measures of curvature}
The principal curvatures are the eigenvalues of the shape pencil and are denoted $\kappa_1$ and $\kappa_2$,
with corresponding orthogonal principal directions ${\bm w}_1$ and ${\bm w}_2$.
Two scalar combinations of $\kappa_1$ and $\kappa_2$ are important:
\be
	K = \kappa_1\kappa_2 = \frac{\det(\Pi_P)}{\det(I_P)},
	\qquad
	H = \frac{\kappa_1+\kappa_2}{2},
\ee
in which $K$ is the {\em Gaussian curvature} and $H$ is the {\em mean curvature}.
For the saddle in Figure\,\ref{fig:saddle}, the principal curvatures are $\kappa_1 =2$ and $\kappa_2=-2$.
The Gaussian curvature is $K=\kappa_1 \kappa_2 = \frac{\det(\Pi_P)}{\det(I_P)}=-4$; $K>0$ is indicative of a dome-like surface, $K<0$ is indicative of a saddle, and $K=0$ means that the surface is locally flat in at least one direction.
The mean curvature has been famously applied to soap films and bubbles.
A soap film, having no pressure difference across it, must have zero mean curvature everywhere, whence $H=0$.
If the surface curves ``up'' in one direction, it must curve ``down'' by the same amount in the perpendicular direction to ``keep things balanced.''
There is ambiguity associated with both the mean and Gaussian curvatures, since the former says nothing about how much ``up'' and how much ``down'' there is, while the latter cannot distinguish between planes and cylinders.

The Gaussian curvature appears initially to depend on how the surface is embedded in space.
However, Gauss proved that $K$ can be expressed in terms of $E$, $F$, $G$, and their first and second derivatives\textemdash no reference to the embedding in space is required. This is the {\em Theorema Egregium} (the Remarkable Theorem). This means that $K$ is an intrinsic property of the surface\textemdash a bug crawling on the surface knowing only the metric tensor can measure the Gaussian curvature using only distances and angles. So the bug knows whether it is crawling on a sphere, a plane, or a saddle\textemdash without ever look beyond the surface itself.
A bug on a wire has $(dl)^2$ as the metric tensor that can only measure distance. It cannot detect curvature, or the embedding (in ${\mathbb R}^2$, ${\mathbb R}^3$, or some higher-dimensional space). The wire could be a straight line (in ${\mathbb R}^2$ or ${\mathbb R}^3$), a circle, a helix, or a twisted ellipse\textemdash every wire is perceived intrinsically as linear.

\section{Acceleration and Gauss' formula} \label{sec:Gauss}
The acceleration of point $\bm p$ is given by the five-term formula
\be
\ddot{\bm p} = \ddot{s} \partial_s {\bm p} + \ddot{n} \partial_n {\bm p} + \dot{s}^2 \partial_{ss} {\bm p}
+ 2 \dot{s} \dot{n} \partial_{sn} {\bm p} + \dot{n}^2 \partial_{nn} {\bm p}; \label{eq:acc}
\ee
this is not the classical five-term acceleration formula in Newtonian mechanics.
Unlike $\partial_s {\bm p}$ and $\partial_n {\bm p}$, the second-order partial derivatives in \eqref{eq:acc} are not typically in the tangent plane ${\mathcal T}_{\mathcal M}^P$. 
For that reason we project them onto each of the independent vectors $\partial_s {\bm p}$, $\partial_n {\bm p}$ and ${\mathcal N}_P$ to give 
\beaw
\partial_{ss} {\bm p} &= \Gamma_{ss}^s  \partial_s {\bm p} + \Gamma_{ss}^n  \partial_n {\bm p} + L  {\mathcal N}_P \nonumber \\
\partial_{sn} {\bm p} &= \Gamma_{sn}^s  \partial_s {\bm p} + \Gamma_{sn}^n  \partial_n {\bm p} + M  {\mathcal N}_P \nonumber \\
\partial_{ns} {\bm p} &= \Gamma_{ns}^s  \partial_s {\bm p} + \Gamma_{ns}^n  \partial_n {\bm p} + M  {\mathcal N}_P \nonumber \\
\partial_{nn} {\bm p} &= \Gamma_{nn}^s  \partial_s {\bm p} + \Gamma_{nn}^n  \partial_n {\bm p} + N  {\mathcal N}_P. \label{eq:CS}
\eeaw
The {\em Christoffel symbols} give the magnitudes of the tangential components of 
$\ddot{\bm p}$ and are given by $\Gamma_{ij}^k$.

\subsection{Centripetal acceleration}
The second derivative of the position vector decomposes into a tangential part (the Christoffel symbols) and a normal part (the second fundamental form). In Einstein notation, 
\be
\partial_{ij} {\bm p} = \Gamma_{ij}^k \partial_k {\bm p} + h_{ij} {\mathcal N}_P, \label{eq:GA}
\ee
in which $h_{ij}$ represents the $ij^{th}$ term in the second fundamental form. Equations \eqref{eq:CS}, or alternatively \eqref{eq:GA}, are known as Gauss's equations.
Substituting (\ref{eq:GA}) into \eqref{eq:acc}, which is $\ddot{\bm p}=\ddot{u}^i \partial_i {\bm p} + \dot{u}^i \dot{u}^j \partial_{ij} {\bm p}$, gives
\be
\ddot{\bm p}= \left(\ddot{u}^{k} + \Gamma_{ij}^{k} \dot{u}^{i} \dot{u}^{j}\right) \partial_k {\bm p} + \Pi_P(\dot{\bm p},\dot{\bm p}) {\mathcal N}_P, \label{eq:acc_decom}
\ee
with $\Pi_P(\dot{\bm p},\dot{\bm p}) = h_{ij} \dot{u}^i \dot{u}^j $.
The first term lies in the tangent plane, with the second term normal to it. 
Looking at the shape pencil, we see that bending is balanced against speed when $\Pi_P(\dot{\bm p},\dot{\bm p})=\kappa_n I_P(\dot{\bm p},\dot{\bm p})$ for the normal curvature $\kappa_n$ (this is the centripetal acceleration ${ a}_c=\kappa_n v^2$), that is
\be
\Pi_P(\dot{\bm p},\dot{\bm p}) = \kappa_n \|\dot{\bm p}\|^2.
\ee
If $\frac{\dot{\bm p}}{\|\dot{\bm p}\|} = \cos \theta \, {\bm e}_1 + \sin \theta \, {\bm e}_2$, where ${\bm e}_1$ and  ${\bm e}_2$ are the principal vectors of the shape pencil, there holds:
\begin{equation}
\begin{aligned}
\Pi_P \left(\frac{\dot{\bm p}}{\|\dot{\bm p}\|},\frac{\dot{\bm p}}{\|\dot{\bm p}\|} \right) ={}& \cos^2\theta\, \Pi_P({\bm e}_1, {\bm e}_1) \\
&{}+ 2 \sin \theta \cos \theta \,\Pi_P({\bm e}_1,{\bm e}_2)
+ \sin^2 \theta \, \Pi_P({\bm e}_2,{\bm e}_2).
\end{aligned}
\end{equation}
But $\Pi_P({\bm e}_1,{\bm e}_2) = 0$ due to the orthogonality of the eigenvectors of the shape pencil. Since $\Pi_P({\bm e}_i,{\bm e}_i) = \kappa_i I_P ({\bm e}_i,{\bm e}_i)$, 
we obtain
\be
\kappa_n = \kappa_1 \cos^2\theta + \kappa_2 \sin^2\theta \label{eq:Euler}
\ee
which is Euler's formula; $\theta$ is the angle between ${\bm e}_1$ and $\dot{\bm p}$. Thus
\be
\ddot{\bm p}= (\ddot{u}^k + \Gamma_{ij}^k \dot{u}^i \dot{u}^j) \partial_k {\bm p} + \kappa_n  \|\dot{\bm p}\|^2 {\mathcal N}_P.
\ee
The $\kappa_n \|\dot{\bm p}\|^2$ term is the centripetal acceleration due to the surface's curvature. In other words, the second fundamental form provides the curvature-related force normal to ${\mathcal T}_{\mathcal M}^P$. The Christoffel terms act like fictitious forces in the curvilinear coordinate system, the $\Gamma_{ij}^k \dot{u}^i \dot{u}^j$ term mimics centrifugal and Coriolis effects in the tangent plane.

\subsection{An important formula} The Christoffel symbols
can be computed knowing only the first fundamental form and its first-order derivatives. That is \cite{Shifrin2006}:
\be
\TBT{\Gamma_{ss}^s}{\Gamma_{sn}^s}{\Gamma_{nn}^s}{\Gamma_{ss}^n}{\Gamma_{sn}^n}{\Gamma_{nn}^n} = \tbt{E}{F}{F}{G}^{-1}
\TBT{\frac{1}{2} \partial_s E}{\frac{1}{2} \partial_n E}
{\partial_n F - \frac{1}{2} \partial_s G}
{\partial_s F - \frac{1}{2} \partial_n E}{\frac{1}{2} \partial_s G}{\frac{1}{2} \partial_n G}, \label{eq:CS2}
\ee
where $\Gamma_{sn}^s=\Gamma_{ns}^s$ and $\Gamma_{sn}^n=\Gamma_{ns}^n$ are assumed.

\subsection{Geodesics}
We consider next the force component parallel to the plane ${\mathcal T}_{\mathcal M}^P$. 
In the absence of dissipation, the kinetic energy of a point mass moving unimpeded in the plane is constant and given by ${\mathcal E}=\frac{m}{2} \dot{\bm p} \cdot \dot{\bm p}$. Thus
\beaw
d_t {\mathcal E} &= 0 = m \dot{\bm p} \cdot \ddot{\bm p} \nonumber \\
&= m \obt{\dot s}{\dot n} \tbo{\partial_s {\bm p} }{\partial_n {\bm p}} \cdot \ddot{\bm p}\nonumber \\
&= m \obt{\dot s}{\dot n} \tbt{\partial_s {\bm p}\cdot\partial_s {\bm p}}{\partial_s {\bm p}\cdot\partial_n {\bm p}}{\partial_n {\bm p}\cdot\partial_s {\bm p}}{\partial_n {\bm p}\cdot\partial_n {\bm p}} \left(\tbo{\ddot{s}}{\ddot{n}}
\!+\! \dot{s}^2\tbo{\Gamma^s_{ss}}{\Gamma^n_{ss}} \!+\! 2 \dot{s} \dot{n} \tbo{\Gamma^s_{sn}}{\Gamma^n_{sn}} \!+\! \dot{n}^2\tbo{\Gamma^s_{nn}}{\Gamma^n_{nn}} \right)\nonumber \\
&= m \obt{\dot s}{\dot n} \tbt{E}{F}{F}{G} \left(\tbo{\ddot{s}}{\ddot{n}} +\tbo{\Xi_s} {\Xi_n} \right),
\eeaw
or what is the same
\be
\tbo{\ddot{s}}{\ddot{n}} +\tbo{\Xi_s}{\Xi_n} = 0. \label{eq:EOM}
\ee
Both $\Xi_s$ and $\Xi_n$ are quadratic forms in the {\em Christoffel symbols}:
\be
\Xi_k = \obt{\dot s}{\dot n} \tbt{\Gamma_{ss}^k}{\Gamma_{sn}^k}{\Gamma_{sn}^k}{\Gamma_{nn}^k} \tbo{\dot s}{\dot n} = \Gamma^k_{ij} \dot{u}^i \dot{u}^j. \label{eq:GeoForms}
%
 %\mbox{~and~} \Xi_n = 
%\obt{\dot s}{\dot n} \tbt{\Gamma_{ss}^n}{\Gamma_{sn}^n}{\Gamma_{sn}^n}{\Gamma_{nn}^n} \tbo{\dot s}
%{\dot n} = \Gamma^n_{ij} \dot{u}^i \dot{u}^i.
\ee
Equation \eqref{eq:EOM} follows by direct calculation, and  
requires the chain rule, \eqref{eq:acc}, \eqref{eq:CS},  and the definition of the entries in the {\em first} fundamental form.
Equation \eqref{eq:EOM} generates {\em geodesic trajectories} on $\mathcal M$\textemdash geodesics generalise the notion of rectilinear motion to curved surfaces.
Geodesics are sometimes described, incorrectly, as being `shortest paths' between points on $\mathcal M$. 
In the case of a sphere, the geodesics are great circles with their centres the centre of the sphere.
Imagine going between points $A$ and $B$ on a sphere; going the `long way around' is still a geodesic path, but it is not a shortest path. In general, the geodesic equations are nonlinear and typically difficult to solve.

\section{The elliptic cone} \label{sec:Elliptic}
In the remainder of this section we aim to pull these ideas together using the elliptic cone that has been used as a proxy for a NASCAR track in optimal control studies \cite{Limebeer2025B}. A road surface on the interior of an elliptic cone can be described by  
\[
\frac{s^2}{a^2} + \frac{n^2}{b^2} = \frac{z^2}{c^2}
\]
in Cartesian coordinates, with gravity pointing downwards. A position vector on the interior of the cone can be described in Monge coordinates by
\be
{\bm p} = \left[ s,\,n,\,f(s,n) \right],
\ee
in which $f(s,n) = c\sqrt{\frac{s^2}{a^2} + \frac{n^2}{b^2}}$.

The metric tensor is given by
\be
I_P = \tbt{E}{F}{F}{G},
\ee
where
\beaw
E &= 1 + (\partial_s f)^2 = 1 +\frac{c^2s^2}{a^4 R^2} \\
F &= \partial_{s} f \partial_{n} f = \frac{c^2sn}{a^2 b^2 R^2} \\
G &= 1 + (\partial_n f)^2 = 1 +\frac{c^2n^2}{b^4 R^2},
\eeaw
with $R=\sqrt{\frac{s^2}{a^2} + \frac{n^2}{b^2}}$. The metric is singular at the origin $(0,0)$; if one looks at the denominators, $R$ goes to zero at the vertex of the cone. This makes sense physically because the curvature at the tip of a cone is undefined\textemdash it's a point where the surface is no longer smooth.

The surface normal at $\bm p$ is given by
\beaw
{\mathcal N}_P &= \frac{\partial_s {\bm p} \times \partial_n {\bm p}}{\| \partial_s {\bm p} \times \partial_n {\bm p} \|} \\
&= \frac{[-\partial_s f,\,-\partial_n f,1]}{\Delta},
\eeaw
where $\Delta = \sqrt{1+\frac{c^2}{R^2}\left(\frac{s^2}{a^4}+\frac{n^2}{b^4}\right)}$.
Since
$\partial_{ss} f = \frac{cn^2}{a^2 b^2 R^3}$, $\partial_{sn} f = -\frac{csn}{a^2 b^2 R^3}$, and 
$\partial_{nn} f = \frac{cs^2}{a^2 b^2 R^3}$, the matrix in the second fundamental form is
\be
\Pi_P = \frac{c}{a^2 b^2 R^3 \Delta} \tbt{n^2}{-sn}{-sn}{s^2}.
\ee
The shape pencil is given by
\be
{\mathcal S} = (\lambda I_P - \Pi_P),
\ee
which has principal curvatures (eigenvalues)
\be
\kappa_1 = \frac{c(s^2+n^2+c^2R^2)}{a^2b^2R^3\Delta^3}
\ee
and $\kappa_2 =0$ (since $\mathcal S$ is singular).
%with $\Delta = 1+\frac{c^2}{R^2}\left(\frac{s^2}{a^4}+\frac{n^2}{b^4}\right)$.

In order to find the geodesic curve between points $(s_A,n_A)$ and $(s_B,n_B)$, the initial-value problem \eqref{eq:EOM}, which has the four initial conditions $[s(0),\,n(0),\,\dot{s}(0),\,\dot{n}(0)]$, must be recast as a two-point boundary-value problem (TPBVP) with the mixed boundary
conditions $[s_A(0),\,n_A(0),\,s_B(T),\,n_B(T)]$; $T$ is the unknown terminal time.
In the TPBVP there are
two coupled nonlinear differential equations with four boundary conditions, which are all position-related.
The Christoffel symbols are computed in $(s,n)$ coordinates using \eqref{eq:CS2}. There are two problems with this approach: (1) even for this relatively simple problem the Christoffel symbols are cumbersome, and (2) the TPBVP must be solved by a method such as multiple-shooting, which can be numerically poorly conditioned. Also, while mathematically correct, the TPBVP route lacks an important geometric insight. It turns out that there is a better way$\ldots$

Since $\kappa_2 =0$, the surface is locally flat (in one direction). The geometric significance of this 
is that the surface is {\em developable} (it can be `flattened out' without tearing), which allows one to construct straight-line geodesics in a transformed coordinate system \cite{Shifrin2006}.

In a first step we parametrise the cone as 
\[
{\bm p}(s,n) = \underbrace{[a \cos s, b \sin s,c]}_{\alpha (s)} + n \underbrace{[-a \cos s, -b \sin s,-c]}_{\beta (s)},
\]
which is a {\em ruled surface} with rulings $\beta(u)$ and directrix $\alpha (u)$. This step allows one to draw a 2D coordinate grid onto the 3D cone, much like drawing lines of latitude and longitude onto a 3D globe. The angular position around the cone is given by $s \in [0\,2 \pi]$, while $n$ represents the distance along the straight-line rulings from the rim down to the vertex (at $n=1$). When a cone is unrolled flat onto a plane, the vertex
maps to the origin of the plane, and the straight rulings map to straight rays radiating from the origin. The infinitesimal displacement on the surface is:
\[
dl^2 = d{\bm p} \cdot d{\bm p} = dn^2 (\partial_n {\bm p} \cdot \partial_n {\bm p}) + 2 dsdn (\partial_s {\bm p} \cdot \partial_n {\bm p})
+ ds^2 (\partial_s {\bm p} \cdot \partial_s {\bm p}).
\]
In the ruling direction $(dn)$, the coefficient is:
\[
\partial_n {\bm p} \cdot \partial_n {\bm p} = a^{2}\cos ^{2}s+b^{2}\sin ^{2}s+c^{2}.
\]

In the flat plane, the distance from the origin is measured by the polar radius $r$. For the mapping to be an isometry along these rulings,
$r$ must match the true geometric distance from the vertex along the 3D ruling generator:
\be
r(s,n)=n \partial_n {\bm p}(s,n) \cdot \partial_n {\bm p}(s,n) = n \sqrt{a^{2}\cos ^{2}s+b^{2}\sin ^{2}s+c^{2}}.
\ee
Once ${\bm p}(s,n)$ is expressed in flat polar coordinates $(r, \phi)$, it can be transformed into 2D Cartesian coordinates $(U, V)$
using $(U=r\cos \phi ,\, V=r\sin \phi)$, which ensures that the flat Pythagorean metric 
\[
dU^{2}+ dV^{2}= dr^{2}+r^{2} d\phi ^{2}
\]
holds.

To find the relationship for the angle $\phi(s)$, we equate the flat-plane area element $r dr d \phi$ to the cone's surface-area element $\| \partial_s {\bm p} \times \partial_n {\bm p} \|$, knowing that the mapping between the two is isometric, to obtain
\be
\phi(s) = \int_0^s \frac{\sqrt{a^2 b^2 + c^2(a^2 \sin^2 \tau +b^2 \cos^2 \tau)}}{a^2 \cos^2 \tau + b^2 \sin^2 \tau + c^2} d \tau,
\ee
which is an elliptic integral from which geodesics can be easily found; filling these details is left as an exercise for the reader.

The geodesics are straight lines in the $(U,V)$ plane, which can be mapped back onto the cone's surface using the inverse mapping
\be
n = \frac{r}{\sqrt{a^2 \cos^2 s + b^2 \sin^2 s + c^2}}
\ee
and by inverting $\phi(s)$. 
The developed surface is an ellipsoidal sector in the $(r, \phi)$ plane. The total elapsed angle is
\be
\Phi_{total} = \int_0^{2 \pi} \frac{\sqrt{a^2 b^2 + c^2(a^2 \sin^2 \tau +b^2 \cos^2 \tau)}}{a^2 \cos^2 \tau + b^2 \sin^2 \tau + c^2} d \tau
\ee
which is less than $2 \pi$ (the deficit angle) because the cone is not flat at the vertex.
The elliptic integral tells us that the cone's intrinsic geometry is flat, but non-uniformly distorted.
Figure\,\ref{fig:GeoD} shows a geodesic curve on the surface of an elliptic cone that was computed using the code in Appendix\,\ref{App:Geo}.

\begin{figure*}[h!] 
\begin{center}
\includegraphics[trim=0mm 80mm 0mm 85mm, clip, width=0.78\textwidth]{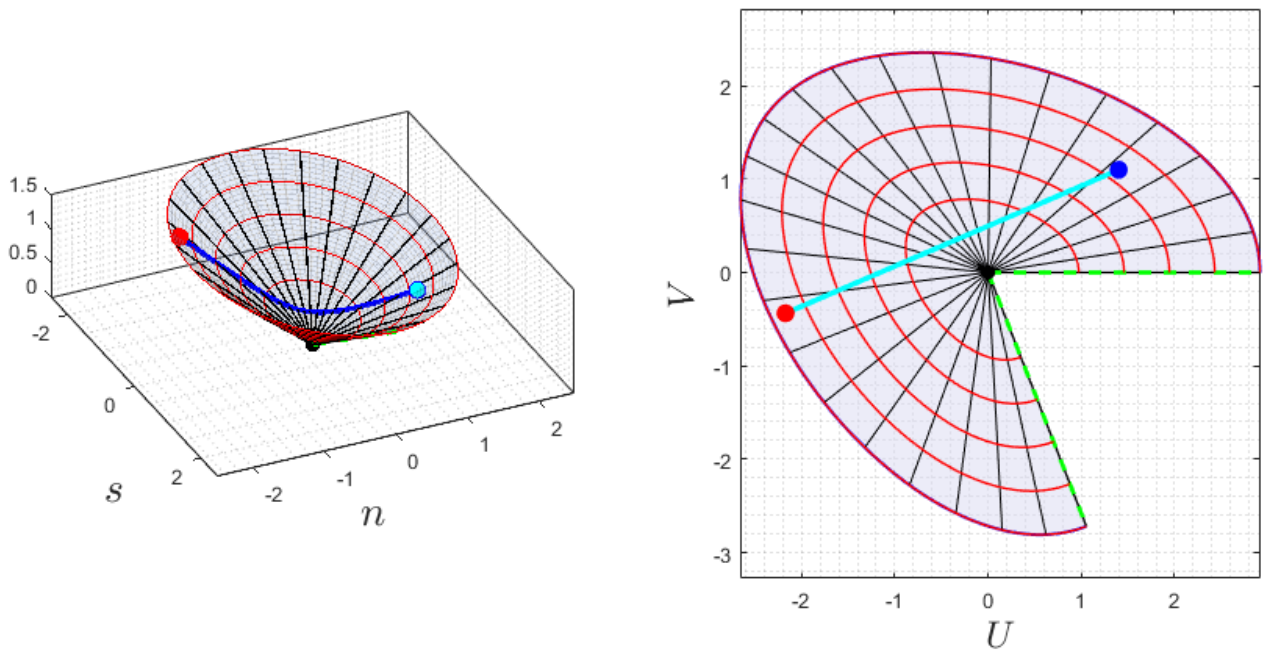}
\caption{A geodesic curve between two points on the surface of an elliptic cone ($a\,=\,1.5$, $b\,=\,1.0$ and $c\,=\,0.9$), and with $(s_A,\,n_A) = (1,\,1)$ and $(s_B,\,n_B) = (-1,\,-1.4)$. The right-hand figure shows the straight line geodesic on the developed cone. The geodesic arclength is 3.88, with the deficit angle 68.8$^o$.}
\label{fig:GeoD}
\end{center}
\end{figure*}

This example isn't just about studying cones. It's about using geometry to avoid analysis and the solution of nonlinear differential equations when a surface is developable. The difficult part is finding the flattening map (solving an elliptic integral in this case). After that, the geodesics are straight lines that essentially come for free.

\section{Dynamics}\label{sec:Dynamics}
There is nothing interesting about particle motions on stationary planar surfaces, which is not the case when the underlying surface is curved.
Here we consider the dynamics of a point mass $m$ with initial conditions $s_0$, $n_0$, $\dot{s}_0$, and $\dot{n}_0$. Since the system is conservative, the particle's total energy is
conserved (with a constant Hamiltonian\textemdash $d_t {\mathcal H}=0$), with its Lagrangian given by
\be
{\mathcal L}(s,n,\dot{s},\dot{n})=\frac{m}{2} \| \dot{\bm p} \|^2 - mgcR.
\ee 
The associated Euler-Lagrange equations are
\be
d_t \partial_{\dot{s}} {\mathcal L} = \partial_s {\mathcal L} \mbox{~~~~and~~~~} d_t \partial_{\dot{n}} {\mathcal L} = \partial_n {\mathcal L}
\ee
These can be solved to yield
\beaw
d_t(E\dot{s}+F\dot{n}) &= \frac{1}{2} \left( \partial_s E\dot{s}^2 +2 \partial_s F \dot{s}\dot{n} +\partial_s G \dot{n}^2 \right) -\frac{g cs}{a^2 R} \\
d_t(F\dot{s}+G\dot{n}) &= \frac{1}{2} \left( \partial_n E\dot{s}^2 +2 \partial_n F \dot{s}\dot{n} 
+\partial_n G \dot{n}^2 \right) -\frac{g c n}{b^2 R},
\eeaw
which can be arranged to read
\beaw
\tbo{\ddot{s}}{\ddot{n}} &= \tbt{E}{F}{F}{G}^{-1}
\left(\TBT{\frac{1}{2} \partial_s E}{\frac{1}{2} \partial_n E}
{\partial_n F - \frac{1}{2} \partial_s G}
{\partial_s F - \frac{1}{2} \partial_n E}{\frac{1}{2} \partial_s G}{\frac{1}{2} \partial_n G} 
\Tbo{\dot{s}^2}{\dot{s}\dot{n}}{\dot{n}^2}
- \frac{gc}{R} \tbo{\frac{s}{a^2}}{\frac{n}{b^2}} \right)  \nonumber \\
&= -\left( \dot{s}^2\tbo{\Gamma^s_{ss}}{\Gamma^n_{ss}} 
+ 2 \dot{s} \dot{n} \tbo{\Gamma^s_{sn}}{\Gamma^n_{sn}}
+ \dot{n}^2\tbo{\Gamma^s_{nn}}{\Gamma^n_{nn}} \right)  
- \frac{gc}{R} \tbt{E}{F}{F}{G}^{-1}\tbo{\frac{s}{a^2}}{\frac{n}{b^2}} \nonumber \\
&= -\tbo{\Xi_s}{\Xi_n} - \frac{gc}{R} \tbt{E}{F}{F}{G}^{-1}\tbo{\frac{s}{a^2}}{\frac{n}{b^2}}
\eeaw
using the chain rule, \eqref{eq:EOM}, and \eqref{eq:GeoForms}. The first term corresponds to the geodesic equations \eqref{eq:EOM}, while the second term is the conservative force due to gravity.
In Einstein notation,
\be
\ddot{u}^{k}=-\Gamma_{ij}^{k}\dot{u}^{i}\dot{u}^{j}-gc\,g^{k \ell}\partial_{\ell}R, \label{eq:index_form}
\ee
where $g^{k \ell}$ indicates the inverse of the metric tensor $g_{k \ell}$.

Equation \eqref{eq:index_form} can be obtained directly from Newton's second law. To do this,
we need simply to add the gravity-related force action on the point mass to the curvature-related forces in the geodesic equations; we can assume $m=1$ without loss of generality.
To begin, we decompose gravity into the normal and tangent plane directions to obtain 
\be
{\bm g} = \alpha \partial_s {\bm p} + \beta \partial_n {\bm p} +  \gamma {\mathcal N}_P.
\label{eq:GravDecp}
\ee
From which we obtain
\beaw
\gamma &= ({\bm g} \cdot {\mathcal N}_P) \nonumber \\
\tbt{\partial_s {\bm p} \cdot \partial_s {\bm p}}{\partial_s {\bm p} \cdot \partial_n {\bm p}}
{\partial_n {\bm p} \cdot \partial_s {\bm p}}{\partial_n {\bm p} \cdot \partial_n {\bm p}} \tbo{\alpha}{\beta} &=
\tbo{\partial_s {\bm p} \cdot {\bm g} }{\partial_n {\bm p} \cdot {\bm g}},
\eeaw
and so
\be
\tbo{\alpha}{\beta} = \tbt{E}{F}{F}{G}^{-1} \tbo{\partial_s {\bm p} \cdot {\bm g} }{\partial_n {\bm p} \cdot {\bm g}}
= g_c g^{k \ell} \partial_\ell R. \label{eq:GravVec}
\ee
In this context the normal reaction force is dynamically non-contributory, with the $g_c g^{k \ell} \partial_\ell R$ terms added to the geodesic equation \eqref{eq:EOM} to obtain equations \eqref{eq:index_form}. 

\begin{remark}
Particles subject to external force systems that have no tangential components follow geodesic trajectories described by the Christoffel symbols.
Normal forces produce non-working reactions, leaving the intrinsic trajectory geodesic.
All other external forces such as aerodynamic drag, driving and braking forces, and tangential gravity components cause the path to deviate from geodesic curves.
This composition of ideas is why differential geometry is such a powerful tool for constraint mechanics.
\end{remark}

In Figure\,\ref{fig:ConeD} a unit-mass particle is released from an initial position on an elliptic cone; these curves were computed using the simulation code in Appendix\,\ref{App:DynEllip}.  
Since this system is conservative, the particle follows a constant-energy trajectory, which involves continuous energy trading between kinetic and potential energy, while keeping their sum constant (at the initial release potential energy). The resulting motion is only periodic for special initial conditions.
Another view of the particle's motion can be seen in the phase portraits shown in Figure\,\ref{fig:PPort},
which are useful when trying to distinguish between periodic, quasi-periodic, and chaotic behaviours.

\begin{figure*}[ht] 
	\begin{center}
		\includegraphics[trim=10mm 85mm 10mm 90mm, clip, width=0.7\textwidth]{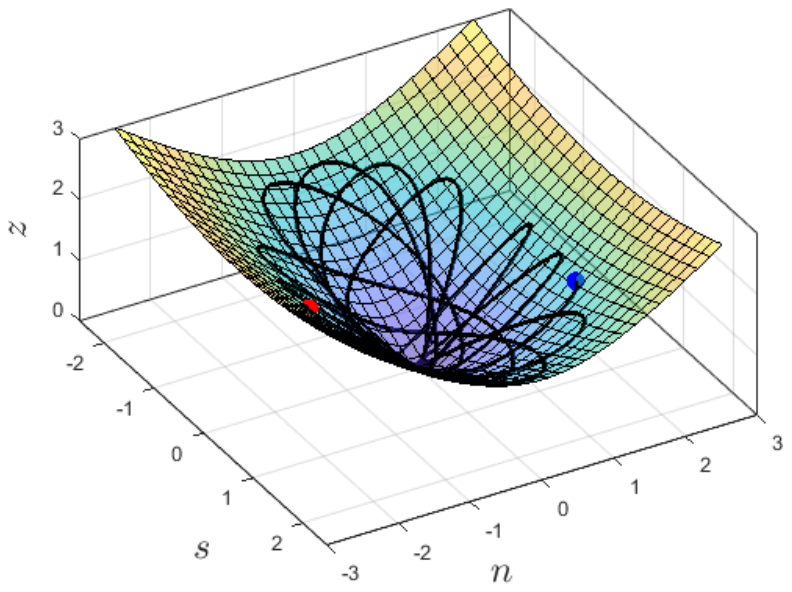}
		\caption{Motion of a unit mass on the interior surface of an elliptic cone. The blue dot marks the start of the trajectory, while the red dot marks its end.}
		\label{fig:ConeD}
	\end{center}
\end{figure*}

\begin{figure*}[ht] 
	\begin{center}
		\includegraphics[trim=10mm 85mm 10mm 90mm, clip, width=0.7\textwidth]{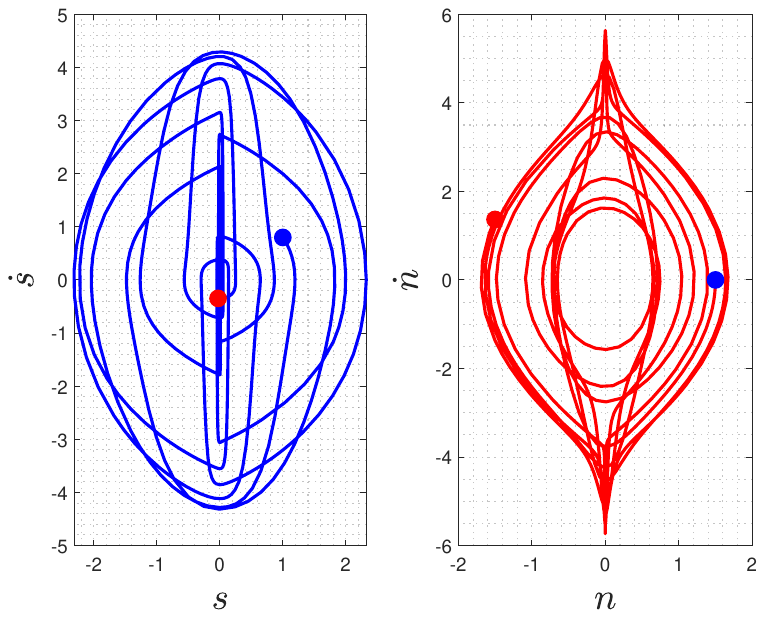}
		\caption{Phase portrait of the conservative trajectory of a unit mass on the interior surface of an elliptic cone.}
		\label{fig:PPort}
	\end{center}
\end{figure*}

\section{Connection to vehicle dynamics}\label{sec:VehicleDynamics}
In this section the geometric machinery described above is recast into a form usable in vehicle dynamics problems. 
The central idea is that the influence of the road surface is fully characterised by three geometric objects: the metric tensor $I_P$, a Jacobian $J$, and the shape operator $\mathcal S$. The metric tensor is also used to express
gravity in the vehicle's body-fixed coordinate system using arguments that resemble \eqref{eq:GravDecp}.
Collectively, these quantities provide a complete interface between the road geometry and the vehicle model \cite{Limebeer2025}.
Examples of the use of models of this kind can be found in \cite{VandeMerwe2026, Limebeer2026}.

To begin, we consider two alternative expressions for the velocity of the mass centre of a vehicle on a curved surface.
\[
{\bm v} = {\bm e}_x^b u + {\bm e}_y^b v \mbox{~~~and~~~} {\bm v} = \partial_s {\bm p} \dot{s} + \partial_n {\bm p} \dot{n}
\]
in which ${\bm e}_x^b$ and ${\bm e}_y^b$ represent an orthonormal body-fixed coordinate system. From these we obtain 
\beaw
{\bm e}_x^b \cdot {\bm v} = u = {\bm e}_x^b \cdot \partial_s {\bm p}\, \dot{s} + {\bm e}_x^b \cdot  \partial_n {\bm p} \,\dot{n} \nonumber \\
{\bm e}_y^b \cdot {\bm v} = v = {\bm e}_y^b \cdot \partial_s {\bm p}\, \dot{s} + {\bm e}_y^b \cdot  \partial_n {\bm p}\, \dot{n}.
\eeaw
These can be re-packaged as
\be
\tbo{u}{v}= \overbrace{\tbt{{\bm e}_x^b \cdot \partial_s {\bm p}}{{\bm e}_x^b \cdot  \partial_n {\bm p}}{{\bm e}_y^b \cdot \partial_s {\bm p}}
{{\bm e}_y^b \cdot  \partial_n {\bm p}}}^{J} \tbo{\dot{s}}{\dot{n}}  \Rightarrow \tbo{\dot{s}}{\dot{n}} = J^{-1} \tbo{u}{v}, \label{eq:LinEqnA}
\ee
in which the Jacobian $J$ represents a transformation from the curvilinear coordinate system $(\partial_s {\bm p},\,\partial_n {\bm p})$ to the body-fixed coordinate system $({\bm e}_x^b,\,{\bm e}_y^b)$. Equation \eqref{eq:LinEqnA} shows how the engine-driven body-fixed velocities $u$ and $v$ translate into the ground-plane velocity components $\dot s$ and $\dot n$.
%\subsection{Coordinate transforms and the metric tensor}

The following lemma formally links the Jacobian to the metric tensor.
\begin{lemma}\label{lem:JacobianMetric}
Let $\mathcal{M}$ be a two-dimensional manifold embedded in $\mathbb{R}^3$ and parameterised by $(s, n)$;
the local tangent space is described by $\mathcal{T}_{\mathcal{M}}^P$.
Let the coordinate basis vectors of the tangent space be given by $\partial_s \bm{p}$ and $\partial_n \bm{p}$.
If $\bm{e}_x^b$ and $\bm{e}_y^b$ are any two orthonormal basis vectors spanning $\mathcal{T}_{\mathcal{M}}^P$, and the coordinate transformation Jacobian matrix $J \in \mathbb{R}^{2 \times 2}$ is defined in \eqref{eq:LinEqnA},
then $I_P = J^T J$.
\end{lemma}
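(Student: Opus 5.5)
The plan is to expand each coordinate basis vector in the orthonormal frame and then compute the metric entries directly. Since $\bm e_x^b$ and $\bm e_y^b$ are orthonormal and span $\mathcal T_{\mathcal M}^P$, while $\partial_s\bm p$ and $\partial_n\bm p$ both lie in that plane, each coordinate basis vector has the expansion
\[
\partial_s\bm p = (\bm e_x^b\cdot\partial_s\bm p)\,\bm e_x^b + (\bm e_y^b\cdot\partial_s\bm p)\,\bm e_y^b,
\qquad
\partial_n\bm p = (\bm e_x^b\cdot\partial_n\bm p)\,\bm e_x^b + (\bm e_y^b\cdot\partial_n\bm p)\,\bm e_y^b .
\]
The coefficients in these expansions are exactly the columns of $J$ in \eqref{eq:LinEqnA}. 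In other words, the first column $J_{\cdot 1}$ holds the body-frame coordinates of $\partial_s\bm p$, and the second column $J_{\cdot 2}$ holds those of $\partial_n\bm p$.

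Next I will compute the entries of $I_P$ as defined in \eqref{eq:KE}. Take, for example, $E=\partial_s\bm p\cdot\partial_s\bm p$ and substitute the expansion above. Orthonormality ($\bm e_x^b\cdot\bm e_x^b=\bm e_y^b\cdot\bm e_y^b=1$, $\bm e_x^b\cdot\bm e_y^b=0$) removes the cross terms, which leaves
\[
E = J_{11}^2+J_{21}^2 = (J^TJ)_{11}.
\]
The same computation gives $F=J_{11}J_{12}+J_{21}J_{22}=(J^TJ)_{12}=(J^TJ)_{21}$ and $G=J_{12}^2+J_{22}^2=(J^TJ)_{22}$. Comparing entry by entry then yields $I_P=J^TJ$.

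As a cross-check I will also run a velocity argument. Compute $\|\bm v\|^2$ in two ways. In the body frame it equals $u^2+v^2$, which by \eqref{eq:LinEqnA} is $[\dot s~\dot n]\,J^TJ\,[\dot s~\dot n]^T$. In coordinates it equals $[\dot s~\dot n]\,I_P\,[\dot s~\dot n]^T$ by \eqref{eq:KE}. The two expressions agree for all $(\dot s,\dot n)$, and both $I_P$ and $J^TJ$ are symmetric, so polarization forces the two matrices to be equal.

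The only delicate point is justifying the expansion in the first step. It needs both $\partial_s\bm p$ and $\partial_n\bm p$ to lie in the span of $\{\bm e_x^b,\bm e_y^b\}$ with no normal component. This follows from the hypothesis that $\bm e_x^b$ and $\bm e_y^b$ span $\mathcal T_{\mathcal M}^P$, together with the fact that the coordinate vectors span the tangent plane. I will also note that $J$ is invertible because $\det I_P=(\det J)^2>0$, which is consistent with the inversion used in \eqref{eq:LinEqnA}.
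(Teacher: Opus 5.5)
Your proposal is correct and follows essentially the same route as the paper: expanding $\partial_s\bm p$ and $\partial_n\bm p$ in the orthonormal frame $\{\bm e_x^b,\bm e_y^b\}$ is exactly the completeness relation the paper invokes, after which $E$, $F$, $G$ are read off entry by entry as the components of $J^TJ$. Your velocity/polarization cross-check is also valid but is not needed for the proof.
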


\begin{proof}
Let $\bm{e}_x^b$ and $\bm{e}_y^b$ form an orthonormal basis for the tangent space $\mathcal{T}_{\mathcal{M}}^P$.
The completeness relation 
\be
\bm{a} \cdot \bm{b} = (\bm{a} \cdot \bm{e}_x^b)(\bm{e}_x^b \cdot \bm{b}) + (\bm{a} \cdot \bm{e}_y^b)(\bm{e}_y^b \cdot \bm{b}) \label{eq:completeness}
\ee
applies to any vectors $\bm{a}, \bm{b} \in \mathcal{T}_{\mathcal{M}}^P$.
Expanding $J^T J$ element-by-element reveals that each metric component is given by $g_{ij} = \partial_i \bm{p} \cdot \partial_j \bm{p}$ for $i,j \in \{s,n\}$, which is the first fundamental form.
\end{proof}

\subsection{Singular points and Tikhonov regularization}
Points where $J^{-1}$ does not exist are called singular.
These points can arise when the centre line's centre of curvature lies within the track boundaries \cite{fork2023euclidean};
see Figure\,\ref{fig:Singularity}. To see this, observe that for points on the centre line \cite{Limebeer2025}
\be
\partial_s {\bm p}(s,n) = [\cos \psi(s)(1-n \psi'(s));
\sin \psi(s)(1-n \psi'(s)); \partial_s z(s,n)]
\ee
vanishes when $\partial_s z(s,n)=0$ (the track is horizontal  in the $s$-direction), and $1-n \psi'(s)=0$ ({\em i.e.}
when the vehicle's lateral position $n$ coincides with the local centre of curvature).
At such points, the mapping from body-fixed velocities ${\bm q} = (u,\,v)$ to the surface velocities ${\bm \nu} = (\dot{s},\,\dot{n})$
in \eqref{eq:LinEqnA} loses rank, making $J$ singular.

\begin{figure}[ht]
	\begin{center}
		\includegraphics[trim=5cm 1cm 0mm 0cm, clip, width=0.8\textwidth]{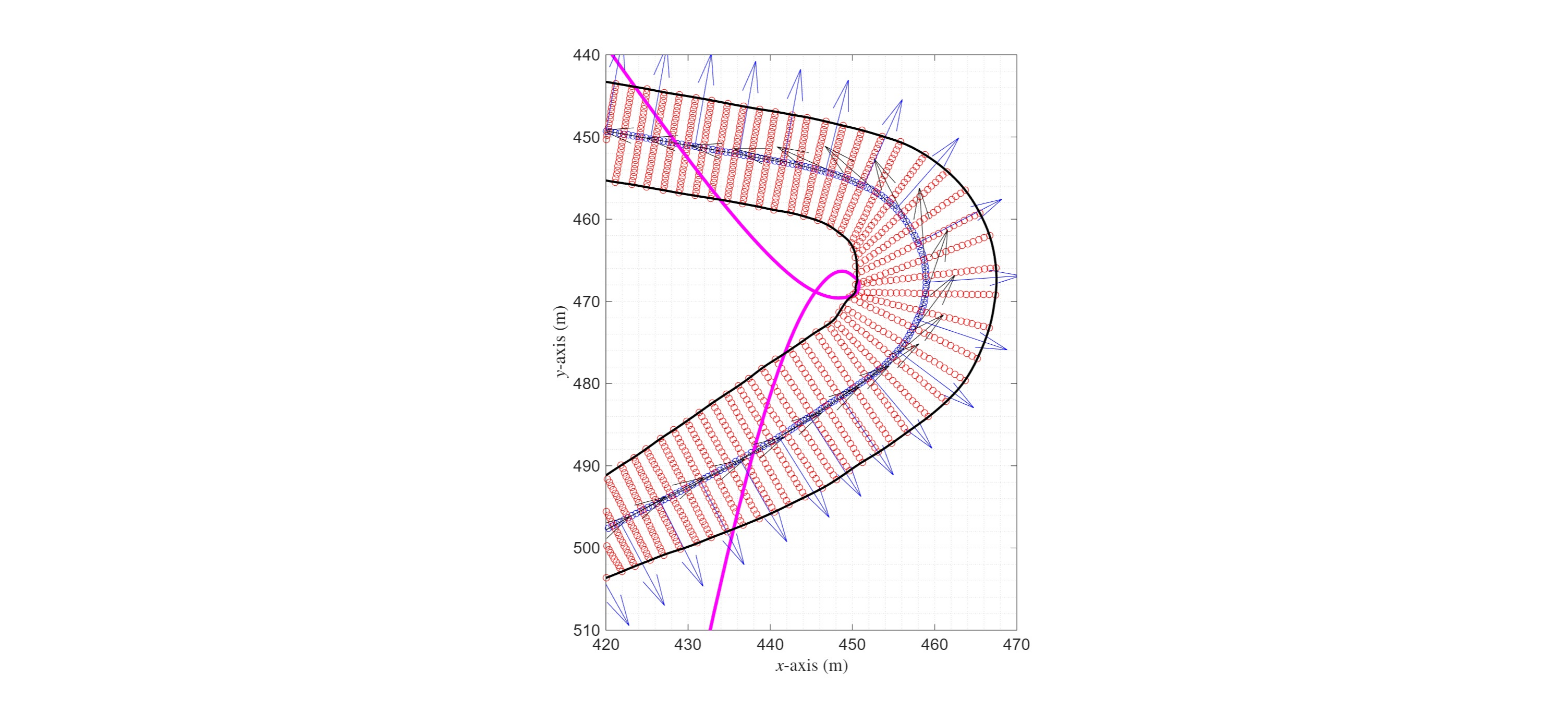}
		\put(-185,102){$S$}
	\end{center}
	\caption{Sharp corner on the Singapore Formula One circuit. The locus of the centre line's centre of curvature is shown as the magenta curve, with a possible singular point of the metric tensor in the neighborhood of point $S$. The red circles are measured LiDAR data points. The black and red arrows are respectively the tangent and principal normal directions. The car's mass centre must avoid these singular points in optimal minimum-lap-time calculations.}
	\label{fig:Singularity}
\end{figure}

This problem can be avoided by re-parametrization, or by using a Tikhonov regularization \cite{tikhonov1977solutions}.
In this (Tikhonov) framework, the inversion of $J$ is replaced by a damped least-squares problem:
\begin{equation}
   E_T({\bm \nu}) = \min_{\bm \nu} \left( \|J {\bm \nu} - {\bm q}\|^2 + \lambda^2 \|{\bf \nu}\|^2 \right), \label{eq:ET}
\end{equation}
where $\lambda > 0$ is the regularizing damping factor. 
Minimizing $E_T({\bm q})$ balances the solution error against a soft penalty on the input energy --- a structure reminiscent of standard LQ control.
From \eqref{eq:ET} and Lemma~\ref{lem:JacobianMetric} we obtain the necessary condition for optimality
\be
  \partial_{\bm \nu} E_T({\bm \nu}) = J^T (J {\bm \nu} - {\bm q}) + \lambda^2 {\bm \nu} = 0 \implies (I_P + \lambda^2 I){\bm \nu} = J^T {\bm q}.
\ee
In the $\lim_{\lambda \rightarrow 0^+}$, this becomes the Moore-Penrose inverse.
Since $I_P + \lambda^2 I$ is positive definite, it can be factorised into Cholesky upper-triangular form $I_P + \lambda^2 I = L^T L$ in which
\be
L_{11} = \sqrt{E + \lambda^2};~L_{12} = \frac{F}{\sqrt{E + \lambda^2}};~ L_{22} = \sqrt{G + \lambda^2 - L_{12}^2},
\ee
using a Gram-Schmidt orthogonalisation process \cite{Golub1996}. This gives
\begin{equation}
\tbo{\partial_s \bm p}{\partial_n \bm p}
=
\tbt{\sqrt{E+\lambda^2}}{0}
	{\dfrac{F}{\sqrt{E+\lambda^2}} } {\sqrt{G+\lambda^2 - L_{12}^2}}
\tbo{\bm q_1}{\bm q_2},
\label{eq:gs_cholesky_basis}
\end{equation}
in which $\bm q_1$ and $\bm q_2$ are the orthonormal Gram--Schmidt vectors. That is
\begin{equation}
	\partial_s \bm p = L_{11}\bm q_1
	\qquad \text{and} \qquad
	\partial_n \bm p = L_{12}\bm q_1 + L_{22}\bm q_2 .
	\label{eq:gs_components}
\end{equation}
Since $\bm q_1$ and $\partial_s \bm p$ are co-linear, the rotation matrix
\[
Q(\chi) = \tbt{\cos\chi}{\sin \chi}{-\sin \chi}{\cos \chi}
\]
allows one to express $u$ and $v$ in the $(\bm q_1, \bm q_2)$ coordinate system, in which 
$\chi$ is the angle between $\partial_s {\bm p}$ and ${\bm e}_x^b$; see \eqref{eq:chiAng} (below) and Figure\,\ref{fig:Angles}.
The matrix $L$ then allows $\dot s$ and $\dot n$ to be evaluates in $(\partial_s {\bm p},\,\partial_n {\bm p})$ coordinates.

The regularised version of Lemma~\ref{lem:JacobianMetric} becomes $I_P + \lambda^2 I = J_{\mbox{reg}}^T J_{\mbox{reg}}$. Then $J_{\mbox{reg}}^T J_{\mbox{reg}} = L^T Q^T Q L$, since $Q$ is orthonormal.
This gives $J_{\mbox{reg}}=QL$ with $J_{\mbox{reg}}$ nonsingular, allowing \eqref{eq:LinEqnA} to be solved.
Equations \eqref{eq:AngVel} and \eqref{eq:GravF} (below) can be dealt with in the same way.

\begin{figure}[ht] 
	\begin{center}
		\includegraphics[trim=30mm 70mm 10mm 0mm, clip, width=0.4\textwidth]{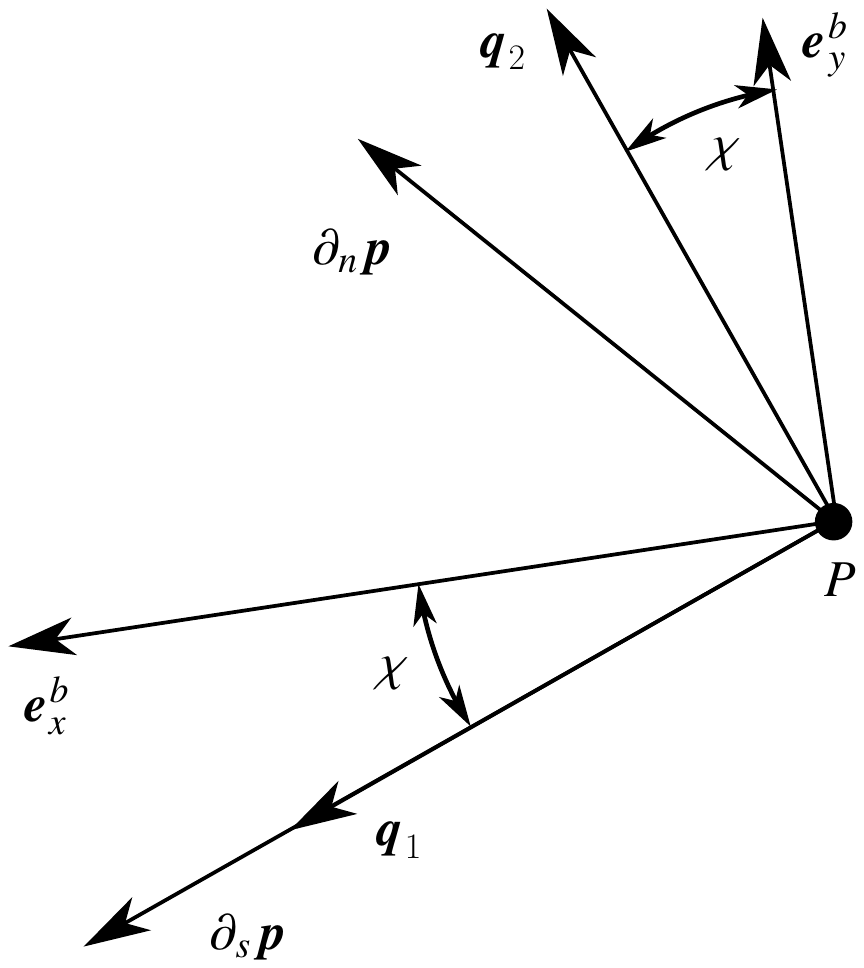}
		\caption{Relative positioning of the tangent plane basis vectors, the vehicle's body-fixed axes, and the Gram-Schmidt vectors ${\bm q}_1$ and ${\bm q}_2$.
			Note that $\partial_s {\bm p}$ and $\partial_n {\bm p}$ are not generally orthogonal, while ${\bm q}_1$ and ${\bm q}_2$ are.}
		\label{fig:Angles}
	\end{center}
\end{figure}

\subsection{Road-induced angular motion and gravity projection}

The second linkage describes the way in which the vehicle's body-fixed angular velocity components are dictated by the road surface (in models without suspension systems)\textemdash the vehicle's pitch and roll dynamics are dictated directly by the road surface.
\begin{lemma}
Let $\Pi_P$ denote the matrix representation of the second fundamental form in the coordinate basis $\{\partial_s \bm{p}, \partial_n \bm{p}\}$, and let $J$ be the coordinate transformation \eqref{eq:LinEqnA}. The vehicle's body-fixed roll and pitch angular velocities, $\omega_y^b$ and $\omega_x^b$, are enforced by the surface parameters and tracking rates as follows:
\be
\tbo{-\omega_y^b}{\omega_x^b} = J^{-T} \Pi_P \tbo{\dot{s}}{\dot{n}}. \label{eq:AngVel}
\ee
\end{lemma}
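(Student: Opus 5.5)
The body frame is $\{\bm e_x^b,\bm e_y^b,\mathcal N_P\}$, with $\bm e_z^b=\mathcal N_P$ and the vehicle rigidly attached to the road (no suspension). The plan is to compute the time derivative of the normal in two ways and compare the tangential components.

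\emph{First, from the body frame.} Since the frame rotates with angular velocity $\bm\omega$,
\[
\dot{\mathcal N}_P=\bm\omega\times\bm e_z^b=\omega_y^b\,\bm e_x^b-\omega_x^b\,\bm e_y^b .
\]
Its body-frame tangential components are therefore
\[
(\bm e_x^b\cdot\dot{\mathcal N}_P,\ \bm e_y^b\cdot\dot{\mathcal N}_P)=(\omega_y^b,\,-\omega_x^b),
\]
which is $-\,[-\omega_y^b,\ \omega_x^b]^T$. The yaw rate does not appear, which is why only two components are determined.

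\emph{Second, from the road geometry.} By the chain rule and the Weingarten relations \eqref{eq:Wein}, $\dot{\mathcal N}_P=\dot s\,\partial_s\mathcal N_P+\dot n\,\partial_n\mathcal N_P$. Dotting with $\partial_s\bm p$ and $\partial_n\bm p$ and using \eqref{eq:2OPD} and \eqref{eq:SFF} gives
\[
\begin{bmatrix}\partial_s\bm p\cdot\dot{\mathcal N}_P\\ \partial_n\bm p\cdot\dot{\mathcal N}_P\end{bmatrix}=-\Pi_P\begin{bmatrix}\dot s\\ \dot n\end{bmatrix}.
\]

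\emph{Link the two bases through $J$.} Because $\dot{\mathcal N}_P$ lies in $\mathcal T^P_{\mathcal M}$, it can be expanded in the orthonormal body basis:
\[
\dot{\mathcal N}_P=(\bm e_x^b\cdot\dot{\mathcal N}_P)\,\bm e_x^b+(\bm e_y^b\cdot\dot{\mathcal N}_P)\,\bm e_y^b .
\]
This is the completeness relation \eqref{eq:completeness} used in the proof of Lemma~\ref{lem:JacobianMetric}. Dotting this expansion with $\partial_s\bm p$ and $\partial_n\bm p$ yields
\[
\begin{bmatrix}\partial_s\bm p\cdot\dot{\mathcal N}_P\\ \partial_n\bm p\cdot\dot{\mathcal N}_P\end{bmatrix}=J^T\begin{bmatrix}\bm e_x^b\cdot\dot{\mathcal N}_P\\ \bm e_y^b\cdot\dot{\mathcal N}_P\end{bmatrix},
\]
since the columns of $J$ in \eqref{eq:LinEqnA} are the body components of $\partial_s\bm p$ and $\partial_n\bm p$. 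Combining the three displays gives
\[
J^T\,(-[-\omega_y^b,\ \omega_x^b]^T)=-\Pi_P[\dot s,\ \dot n]^T .
\]
Cancelling the minus signs and inverting $J^T$ (assuming a nonsingular point) gives \eqref{eq:AngVel}.

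The calculus here is routine. The main obstacle is \emph{sign and frame conventions}. The right-handed ordering $\bm e_x^b\times\bm e_y^b=\mathcal N_P$ must agree with the orientation $\partial_s\bm p\times\partial_n\bm p$ that defines $\mathcal N_P$ and hence the sign of $\Pi_P$. I also need to confirm that $\dot{\mathcal N}_P=\bm\omega\times\mathcal N_P$ is interpreted with $\bm\omega$ resolved in body axes. I will check this first: verifying on a cylinder of radius $\rho$ that a pure forward motion along the circumferential direction produces pitch of the expected sign should confirm the sign pattern $(-\omega_y^b,\omega_x^b)$.
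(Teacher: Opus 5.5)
Your proof is correct and follows the paper's argument essentially step for step. You compute $\dot{\mathcal N}_P$ once as $\bm\omega^b\times\mathcal N_P=\omega_y^b\bm e_x^b-\omega_x^b\bm e_y^b$ and once by the chain rule, project both onto $\partial_s\bm p,\partial_n\bm p$ to get $-J^T[-\omega_y^b,\ \omega_x^b]^T=-\Pi_P[\dot s,\ \dot n]^T$, and invert $J^T$. The only cosmetic difference is that you route through the body components and the completeness relation, whereas the paper dots $\omega_y^b\bm e_x^b-\omega_x^b\bm e_y^b$ directly with $\partial_i\bm p$. Your remark that $\bm e_x^b\times\bm e_y^b=\mathcal N_P$ (equivalently $\det J>0$) is required makes explicit a convention the paper uses silently.
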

\begin{proof}
We begin with the rigid-body kinematic definition of the convective time derivative of the surface unit normal vector field $\mathcal{N}_P$ expressed in the vehicle's body-fixed frame:
\be
\dot{\mathcal{N}}_P = \bm{\omega}^b \times \mathcal{N}_P = (\omega_x^b \bm{e}_x^b + \omega_y^b \bm{e}_y^b + \omega_z^b \mathcal{N}_P) \times \mathcal{N}_P = \omega_y^b \bm{e}_x^b - \omega_x^b \bm{e}_y^b. \label{eq:ndot_body}
\ee
Alternatively, applying the chain rule along the surface coordinates, the normal vector rate of change tracks the extrinsic geometry via:
\be
\dot{\mathcal{N}}_P = \dot{s} \partial_s \mathcal{N}_P + \dot{n} \partial_n \mathcal{N}_P.
\ee
Taking the inner product of this expression with the coordinate basis tangent vectors $\partial_i \bm{p}$ for $i \in \{s,n\}$, and invoking the definition of the second fundamental form entries $\Pi_{ij} = -\partial_j \mathcal{N}_P \cdot \partial_i \bm{p}$, we obtain the matrix projection:
\be
\tbo{\dot{\mathcal{N}}_P \cdot \partial_s \bm{p}}{\dot{\mathcal{N}}_P \cdot \partial_n \bm{p}} = -\Pi_P \tbo{\dot{s}}{\dot{n}}; \label{eq:ndot_projection}
\ee
using the identities in \eqref{eq:CS}.
We now project the body-frame definition of $\dot{\mathcal{N}}_P$ from \eqref{eq:ndot_body} onto the same tangent basis vectors:
\be
\tbo{\dot{\mathcal{N}}_P \cdot \partial_s \bm{p}}{\dot{\mathcal{N}}_P \cdot \partial_n \bm{p}} = \tbo{(\omega_y^b \bm{e}_x^b - \omega_x^b \bm{e}_y^b) \cdot \partial_s \bm{p}}{(\omega_y^b \bm{e}_x^b - \omega_x^b \bm{e}_y^b) \cdot \partial_n \bm{p}} = -J^T \tbo{-\omega_y^b}{\omega_x^b}. \label{eq:ndot_jacobian_match}
\ee
Equating \eqref{eq:ndot_projection} and \eqref{eq:ndot_jacobian_match} give:
\be
\tbo{-\omega_y^b}{\omega_x^b} = J^{-T} \Pi_P \tbo{\dot{s}}{\dot{n}}.
\ee
\end{proof}
Where expressions for the pitch and roll angular accelerations are required, we will assume road surfaces that can accommodate
\be
\tbo{-\dot{\omega}_y^b}{\dot{\omega}_x^b} \approx J^{-T} \Pi_P J^{-1} \tbo{\dot{u}}{\dot{v}} \label{eq:AngAcc}
\ee
using \eqref{eq:LinEqnA}. This approximation neglects the convective derivatives of the metric and shape operator, and is valid for surfaces with slowly varying curvature.

The projection of the gravity vector onto body-fixed axes gives:
\be
{\bm g} = G_x {\bm e}_x^b + G_y {\bm e}_y^b + G_z {\bm e}_z^b,
\label{eq:GravDecp2}
\ee
and dotting this equation with $\partial_s{\bm p}$, $\partial_n{\bm p}$, and ${\mathcal N}_P$, yields
\[
\Tbo{\partial_s{\bm p} \cdot {\bm g}}{\partial_n{\bm p} \cdot {\bm g}}{{\mathcal N}_P \cdot {\bm g}}
= \TbT{\partial_s{\bm p} \cdot {\bm e}_x^b}{\partial_s{\bm p} \cdot {\bm e}_y^b}{0}
{\partial_n{\bm p} \cdot {\bm e}_x^b}{\partial_n{\bm p} \cdot {\bm e}_y^b}{0}
{0}{0}{1} \Tbo{G_x}{G_y}{G_z},
\]
from which
\be
\Tbo{G_x}{G_y}{G_z} = \begin{pmatrix}
  J^{-T} & \begin{pmatrix} 0 \\ 0 \end{pmatrix} \\[2ex]
  \begin{pmatrix} 0 & 0 \end{pmatrix} & 1
\end{pmatrix} \Tbo{\partial_s{\bm p} \cdot {\bm g}}{\partial_n{\bm p} \cdot {\bm g}}{{\mathcal N}_P \cdot {\bm g}}. \label{eq:GravF}
\ee

\section{Single-track vehicle model} \label{sec:singletrack}
For pedagogical purposes, we will use a relatively simple single-track car model (sometimes known as the bicycle model), which has no roll freedom.
Models of this type have their genesis in \cite{2012-PacejkaBook}, and are analysed in detail in \cite{2018-Book-LimebeerMassaro}.
The vehicle frame is modelled as a rigid body; spinning road wheels are not represented.
The kinematics of the single-track vehicle model are shown in Figure\,\ref{fig:Bicycle}, where its ground contact points are on a plane tangent to the road surface.

\begin{figure}[ht!]
\begin{center}
  \includegraphics[trim=0mm 15cm 0mm 6cm, clip, width=.8\textwidth]{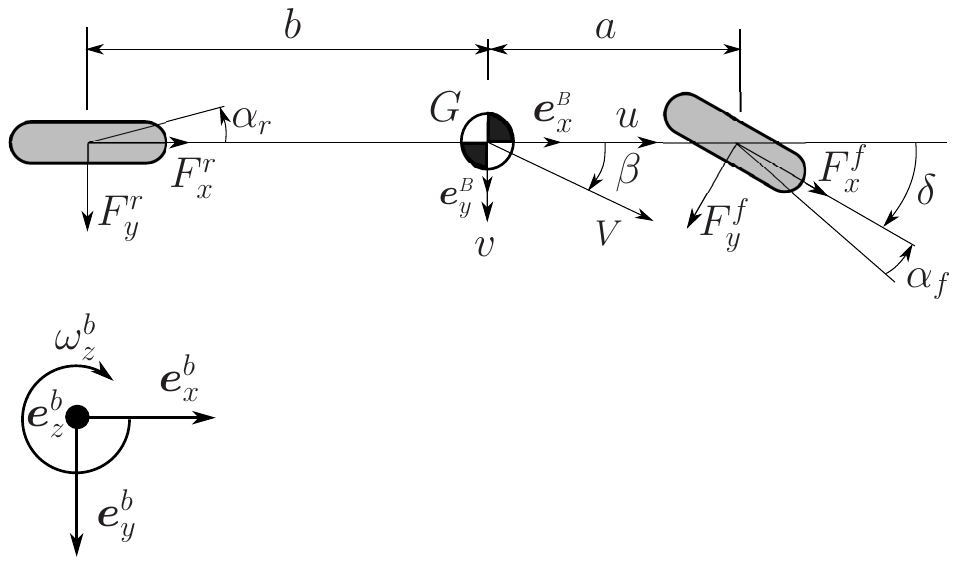}
  \end{center}
 \caption{Kinematics of a single-track car model showing its basic geometric parameters; ${\bm e}_z^b$ points into the page, with angles defined by a right-hand rule.}
 \label{fig:Bicycle}
\end{figure}

The tyre slip angles are given by
\be
\alpha_f = \mbox{atan2}(v+\omega_z^b a,u) - \delta \mbox{~~and~~}
\alpha_r = \mbox{atan2}(v-\omega_z^b b,u).
\ee
The vehicle's velocity components in the ground-plane curvilinear coordinate system
are dictated by the road surface geometry, and are given by \eqref{eq:LinEqnA}.
The vehicle's angular velocity and angular acceleration are given by \eqref{eq:AngVel} and \eqref{eq:AngAcc}, respectively.

\subsection{Vehicle dynamics} The equations of motion are based on the Newton-Euler equations; vector and tensor quantities are expressed in bold face characters:
\be
{\bm F}^b = m \left( \dot{\bm v}^b + {\bm \omega}^b \times {\bm v}^b \right) \mbox{~~and~~}
{\bm M}^b = \left({\bm I}_b \dot{\bm \omega}^b + {\bm \omega}^b \times ({\bm I}_b {\bm \omega}^b) \right),
\ee
where ${\bm v}^b$ and ${\bm \omega}^b$ are the vehicle's velocity and angular velocity vectors, respectively, expressed in body-fixed coordinates.
The mass of the vehicle is $m$, ${\bm I}_b$ the car's moment of inertia tensor expressed in body-fixed coordinates,
with ${\bm F}^b$ and ${\bm M}^b$ the external force and moment acting on the body (also expressed in body-fixed coordinates).
Since the car must remain in contact with the road, one of its translational freedoms (heave) and two of its rotational freedoms (pitch and roll) must be constrained.

The two translational and one rotational freedoms of the car are described by
\begin{eqnarray}
\dot{u} &=& (v+h\omega^b_x)\omega^b_z  + h \dot{\omega}^b_y  + F_x /m; \label{eq:u}\\% Equation (7.113)
\dot{v} &=& (h \omega^b_y - u) \omega^b_z - h \dot{\omega}^b_x + F_y/m;  \label{eq:v} \\  % Equation (7.114)
I_z \dot{\omega}^b_z &=& M_z \label{eq:Mz}  
\end{eqnarray}
in which $I_z$ is the car's yaw moment of inertia (expressed in body-fixed coordinates). The height of the vehicle's mass centre above the road is $h$.

The external forces acting on the car in a car-fixed coordinate system are given by
\begin{eqnarray}
F_x &=& F_{fx} \cos \delta - F_{fy} \sin \delta + F_{rx} + G_x + F_x^a; \label{long_in_f} \\
F_y &=& F_{fy} \cos \delta + F_{fx} \sin \delta + F_{ry} + G_y + F_y^a; \label{lat_in_f} \\
F_z &=& F_{rz} + F_{fz} + G_z + F_z^a, \label{eq:load_force_balance}
\end{eqnarray}
where $\delta$ is the car's steering angle, $F_{fx}$ denotes the $x$-component of the front tyre force; 
the other tyre force components are described in the same way\textemdash the $y$ and $z$ components are given subscripts $y$ and $z$. 
The aerodynamic forces and yaw moments are given by 
\beaw
F_x^a &= k C_D, ~
F_y^a = k C_S, ~
F_z^a = k C_L, ~ \nonumber \\
M_y^a &= k M_A C_y^M,~
M_z^a = k M_A C_z^M,
\eeaw
where $k=\frac{\rho}{2} A u^2$. The density of air is $\rho$,
$A$ its frontal area, $M_A$ is the yaw moment arm, and $C_D$, $C_S$, $C_L$, $C_y^M$ and $C_z^M$ are the drag, side force, lift, and pitch- and yaw-moment coefficients, respectively.
The gravitational force components in body-fixed axes are $G_x$, $G_y$ and $G_z$; see \eqref{eq:GravF}.
The tyre forces are described using empirical formulae; see Appendix\,A in \cite{Perantoni_2014}. These models produce the longitudinal and lateral tyre forces $F_x$ and ${F_y}$,
as functions of the tyres' normal load, longitudinal slip, and side-slip angle.

The external moments acting on the car are given by
\beaw
M_y &= b F_{rz} - a F_{fz} +h (F_{fx} \cos \delta - F_{fy} \sin \delta + F_{rx}) + M^a_y \label{ymom} \\ 
M_z &= a (F_{fy} \cos\delta + F_{fx} \sin\delta) - b F_{ry} + M_z^a. \label{zmom}
\eeaw
The yaw angular velocity comes from \eqref{eq:Mz}.
The vertical force balance constraint is given by
\be
0 = u \omega^b_y - v \omega^b_x -h((\omega^b_y)^2+(\omega^b_x)^2) + F_z/m,  \label{eq:w}
\ee
while the road-constrained pitch angular velocity must satisfy
\be
I_y \dot{\omega}_y^b = M_y + I_z \omega^b_z \omega^b_x. \label{eq:My} % Equation (7.131)
\ee
In this model, the vehicle has no roll freedom. Equation \eqref{eq:My} assumes a narrow-body configuration where the roll moment of inertia is negligible (\(I_x \ll I_z\)), which simplifies the gyroscopic coupling torque induced by the road-enforced angular velocity.
In this rigid-chassis formulation, the normal tyre forces $F_{fz}$ and 
$F_{rz}$ act as algebraic constraint forces rather than independent states. 
Since the road geometry completely prescribes the heave and pitch 
accelerations via \eqref{eq:w} and \eqref{eq:My}, these normal loads are 
algebraically determined at each time step by solving the coupled linear 
system formed by the vertical force balance \eqref{eq:load_force_balance} 
and the pitch moment constraint \eqref{eq:My}.

The heading angle is $\chi$, which is the angle between ${\bm e}_x^b$ and $\partial_s {\bm x}_P$,
which can be computed using \cite{LimebeerVSD2021}
{\small \be
\dot{\chi} = \omega_z^b + \frac{(\partial_{ss} {\bm x}_P \times \partial_s {\bm x}_P) \cdot {\mathcal N}_P}{\partial_s {\bm x}_P \cdot \partial_s {\bm x}_P} \dot{s}
+ \frac{(\partial_{sn} {\bm x}_P \times \partial_s {\bm x}_P) \cdot {\mathcal N}_P}{\partial_s {\bm x}_P \cdot \partial_s {\bm x}_P} \dot{n}; \label{eq:chiAng}
\ee}
see Figure\,\ref{fig:Angles}.
The yaw angular velocity $\omega_z^b$ comes from a yaw-moment balance equation given in \eqref{eq:Mz}.

\section{Optimal control} \label{sec:OCprob}
There is an expansive literature on the optimal control problem (OCP) of road vehicles, with some of this material reviewed
in \cite{2018-Book-LimebeerMassaro,MassaroLimebeer2021}. Our purpose here is to demonstrate how road-surface modelling, vehicle dynamics, and numerical optimal control can be brought together into a unified framework.

\subsection{States, inputs and constraints}
The control inputs to the system description are
\be
{\bm u} =[\dot{\delta},\dot{k}_{f},\dot{k}_{r},F_{fz},F_{rz},\dot{u},\dot{v}]^T, \label{eq:inputsU}
\ee
in which, $\dot{\delta}$ is the steering angular velocity, $\dot{k}_f$ and $\dot{k}_r$ are the rates of change of the tyres' longitudinal slips,
$F_{fz}$ and $F_{rz}$ are the tyre normal loads, while $\dot{u}$ and $\dot{v}$ are the vehicle's longitudinal and lateral acceleration components\textemdash 
the reason for the inclusion of the acceleration terms as inputs will be explained shortly.
 
In the case of the bicycle model described above, the vehicle states are
\be
{\mathbf x} = [n,\,\chi,\,v,\,\omega_z^b,\,u,\,\delta,\,k_f,\,k_r,\,t]^T. \label{eq:states}
\ee
The car's lateral position $n$ (with respect to the track centreline) is computed using the differential equation \eqref{eq:LinEqnA}.
The vehicle's heading angle is computed with the help of \eqref{eq:chiAng}.
The car's lateral velocity $v$ is given by \eqref{eq:v}.
The car's yaw velocity $\omega_z^b$ comes from \eqref{eq:Mz}.
The car's longitudinal velocity $u$ is given by \eqref{eq:u}.
The steering angle $\delta$ and the tyre longitudinal slip are obtained by integrating their corresponding inputs.
The front-wheel slip $k_f$  is constrained so that only braking on the front wheel is allowed.
The elapsed time comes from $t = \int \frac{dt}{ds} ds$ provides the lap time that is useful in a minimum lap time OCP.
The car's sideslip angle is given by $\beta=\arctan \frac{v}{u}$.
For convenience, the independent variable used in the OCP is the elapsed distance $s$ rather than the more conventional time variable.

\subsection{Index-1 implicit ODEs in a collocation framework}
Equations \eqref{eq:u}, \eqref{eq:v} and \eqref{eq:My} are index-1 implicit ODEs. 
A common approach to the removal of this difficulty is to
assume away the implicit nature of these equations by setting $\dot{\omega}^b_y = 0$ and $\dot{\omega}^b_x = 0$.
Alternatively, $\dot{u}$ and $\dot{v}$ are introduced as inputs that are then integrated to generate $u$ and $v$.
With these quantities available, \eqref{eq:LinEqnA}, \eqref{eq:AngVel} and \eqref{eq:AngAcc} can be used to generate
$\dot s$, $\dot n$, $\omega_x^b$, $\omega_y^b$, $\dot{\omega}_x^b$ and $\dot{\omega}_y^b$.
It is then possible to treat \eqref{eq:u}, \eqref{eq:v} and \eqref{eq:My} as equality constraints, 
rather than as explicit differential equations.

\subsection{Other constraints}
In order to keep the vehicle within the track boundaries, the
car's lateral positions are constrained by 
\begin{eqnarray}
r_w/2 \mp n  \mp (a \sin \chi \pm \frac{w}{2} \cos \chi) \cos \phi &\ge& 0 \label{eq:B1} \\
%r_w/2 + n  + (a \sin \chi - \frac{w}{2} \cos \chi) \cos \phi  &\ge& 0 \label{eq:B2} \\
r_w/2 \mp n  \pm (b \sin \chi \mp \frac{w}{2} \cos \chi) \cos \phi  &\ge& 0 \label{eq:B2}
%r_w/2 + n  - (b \sin \chi + \frac{w}{2} \cos \chi) \cos \phi &\ge& 0, \label{eq:B4}
\end{eqnarray}
in which $w$ is the car's notional track width.
The constraints \eqref{eq:B1} ensure that the front wheel remains within the track boundaries,
while \eqref{eq:B2} ensure that the rear wheel remains within the track boundaries.
The ground-plane projection of the track width is $r_w(s)$ and is estimated from measured LiDAR data.
The track's camber angle is $\phi$ which can be obtained from the track data \cite{Limebeer2025}, with $\chi$ described by \eqref{eq:chiAng}.
The vertical force is constrained using \eqref{eq:w}, while the pitch moment is constrained using \eqref{eq:My}.
Finally, the engine power can be constrained using $P_{\scriptscriptstyle\mathrm{max}} -u F_{rx} \ge 0$.
In the context of mechanics, $u$ is the longitudinal body-fixed velocity component,
while in control, ${\bm u}$ is the control input. This distinction should be clear from the context.
Since we are looking at an exemplar minimum lap time problem,
cyclicity constraints are used to enforce continuity in the states across the start-finish line.

\subsection{Performance index}
The performance index is given by
\be
J = \oint \frac{1}{\dot s} \left(1 + \sum_{i=1}^3 R_i u_i^2 \right) ds, \label{eq:PI1E}
\ee
in which the $R_i$'s are constant regularization weighting terms. 
The idea is that $\oint \frac{ds}{\dot s}$ is the lap time to be minimized, with only minor contributions attributable to the inputs.
Since time is a state, the true lap time is measured explicitly.
The regularization terms are used to dampen oscillatory responses and avoid singular arcs in the OCP's solution.

When structuring and solving problems of this type, several issues must be considered.
\begin{itemize}
\item[-] {\bf Nonlinear solver:} The selection of a numerical integration algorithm and an associated nonlinear programme solver. 
Here we use a direct numerical method based on orthogonal collocation.
These techniques are embedded in the software package
%GPOPSII \cite{Darby2,Patterson2015},
GPOPSII \cite{Patterson2015},
which makes use of the nonlinear programme solver `Interior Point OPTimizer' IPOPT \cite{ipopt}.
\item[-] {\bf Scaling:} We normalise the length of the vehicle, the mass of the vehicle, and scale time so that the gravitational acceleration of unity. All other units are based on the units of length, mass, and time.
\item[-] {\bf Nonsmooth functions:} The approximation of non-smooth functions such as $\max(\cdot,\cdot)$, $\min(\cdot,\cdot)$, and $|(\cdot)|$. As an example, $\max(a,b) = a + \max(0,b-a) \approx \frac{(b-a) + \sqrt{(b-a)^2 +\epsilon}}{2}$; other examples can be found in \cite{2018-Book-LimebeerMassaro}.
\item[-] {\bf Gradients:} The efficient computation of gradients. Here we used the MATLAB automatic differentiation package {\em ADiGator} that transforms user function files into
derivative function files \cite{2017-Adigator}.
%{\bf Adaptive mesh refinement strategies:} Here, we use the {\em ph} mesh-refinement algorithm described in \cite{Patterson2014}.
\item[-] {\bf Adaptive mesh refinement strategies:} Here, we use the {\em ph} mesh-refinement algorithm that allows for changes in both the number of mesh intervals and the degree of the approximating polynomial within a mesh interval.
\end{itemize}

\subsection{An example}
To evaluate the numerical performance of the fully integrated optimal control framework, a minimum-lap-time problem was executed 
on the Darlington Raceway, which has a total length of $2100$~m. The interested reader can find a detailed high-fidelity study using a two-track vehicle model with posture-dependent aerodynamics in \cite{LimebeerVSD2021}. Here, the simplified
single-track vehicle equations given in Section\,\ref{sec:singletrack} are used. 
The resulting optimal profiles are presented in Figure\,\ref{fig:ocp_results}.

\begin{figure}[b!]
\centering
\subfloat[Optimal vehicle speed profile.\label{fig:speed_profile}]{%
  \includegraphics[width=0.48\linewidth]{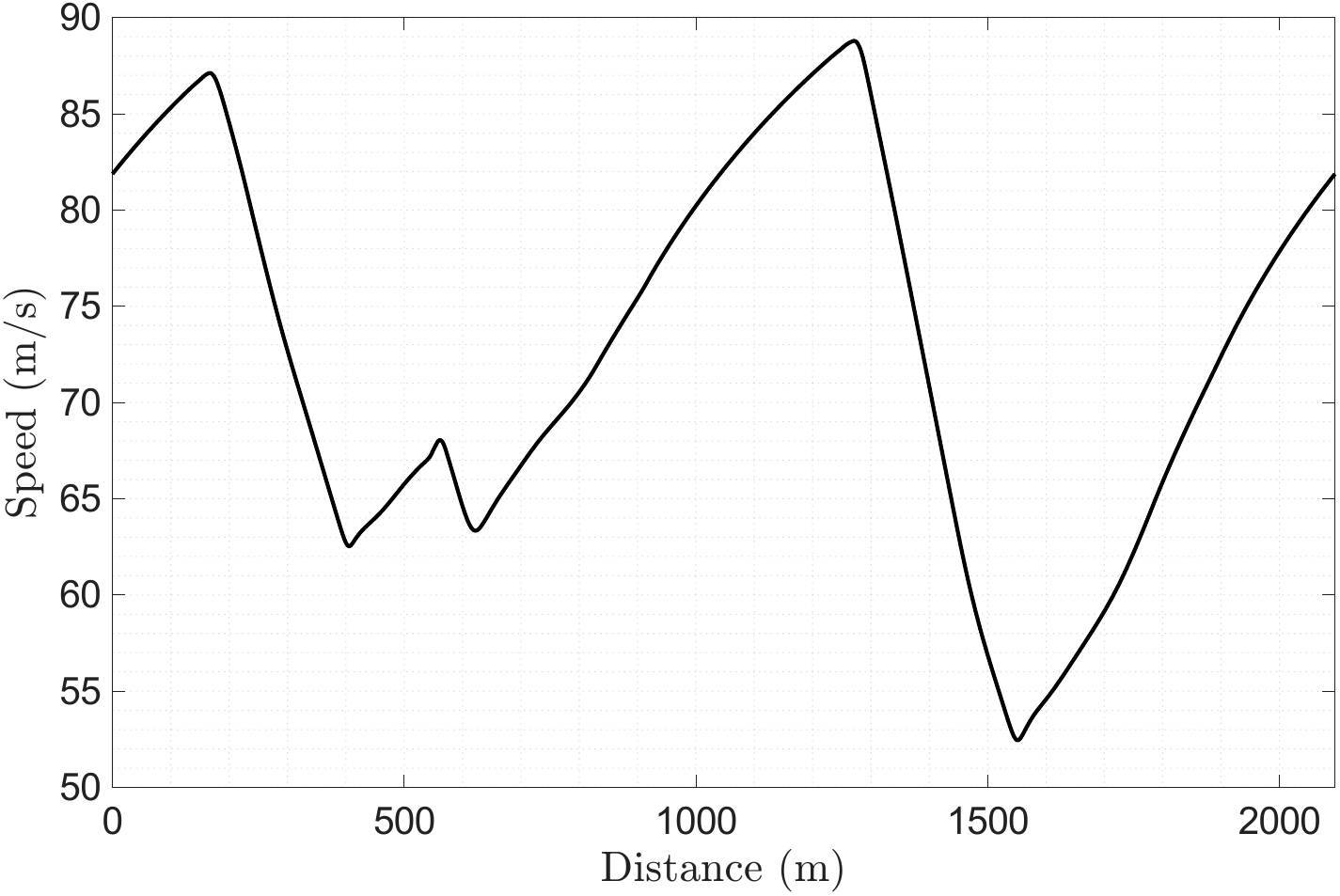}%
}
\hfill
\subfloat[Front ($\kappa_f$, blue) and rear ($\kappa_r$, red) tyre slip parameters.\label{fig:tyre_slips}]{%
  \includegraphics[width=0.48\linewidth]{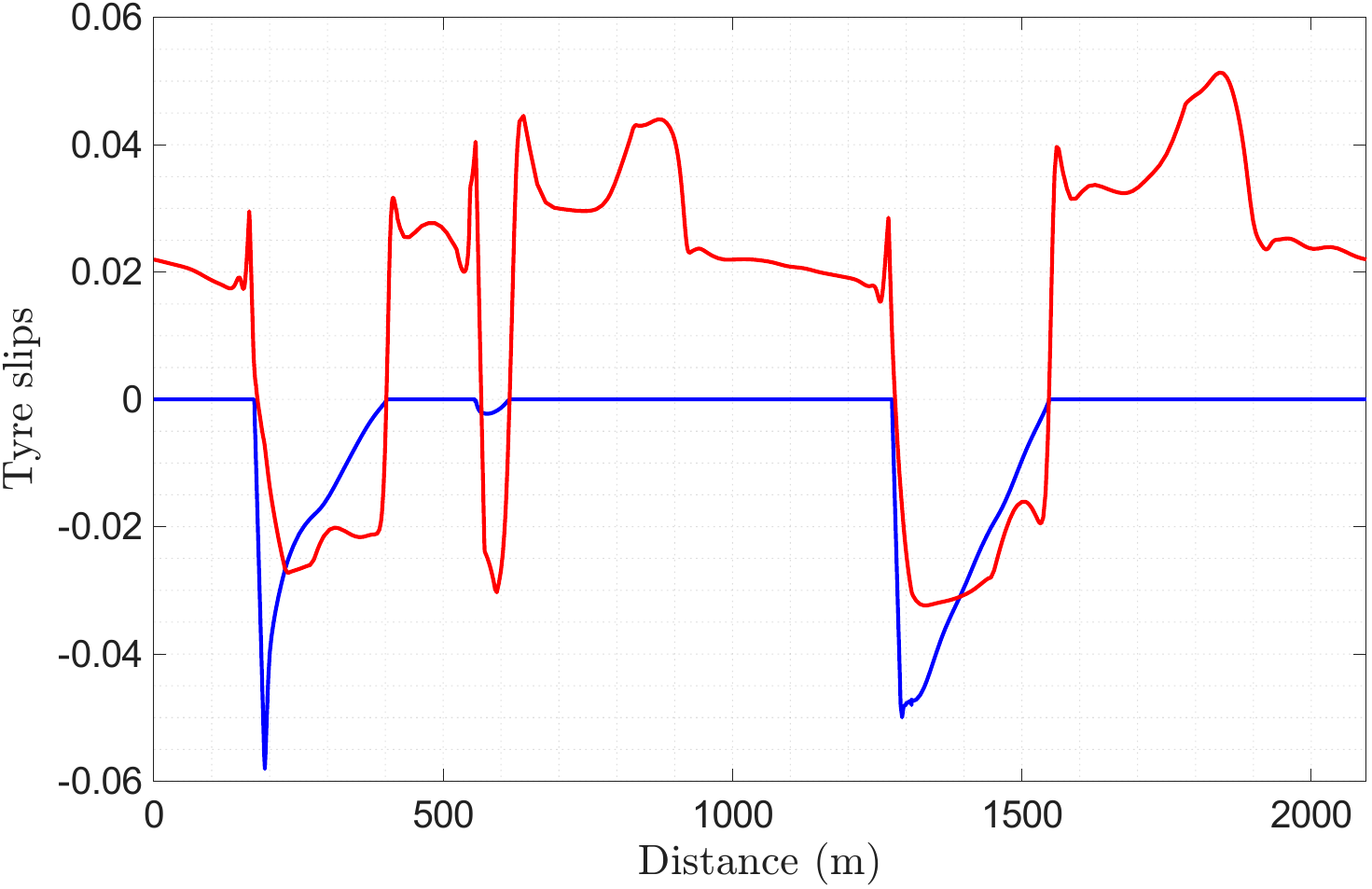}%
}
\caption{Trajectory optimization results computed over the $2100$~m Darlington Raceway, illustrating the explicit physical coupling between transient non-planar braking transitions and saturation limits of the tyre contact patches.}
\label{fig:ocp_results}
\end{figure}

The velocity trace (Figure\,\ref{fig:speed_profile}) shows the speed profile one would expect on a NASCAR oval.
The tyre slips shown in (Figure\,\ref{fig:tyre_slips}) reveal two clear braking zones, with the drive applied to the rear axle only.
The vehicle accelerates from a speed of $82$~m/s, as it crosses the start-finish line, to a speed of $87.1$\,m/s at $170$\,m.
At this point the vehicle enters a phase of heavy braking as it enters the first corner, which is accompanied by sharp negative spikes
in the longitudinal slips on both tyres. The front braking slip of $-0.056$ approaches the saturation limit.
This balances against a corresponding drop in rear tyre slip to $-0.026$.
As the vehicle drives through the section between $410$~m and $620$~m, front braking slip drops to zero, while the rear tyre transitions into full-power traction operating at a high slip ratio peak of nearly $0.035$ to maximize force generation during apex exit. 
Along the main straightaway from $620$~m to $1270$~m, steady acceleration carries the vehicle to its global maximum velocity of $88.7$~m/s, where rear longitudinal slip settles uniformly around a value of $0.022$. At $1270$~m, the vehicle undergoes a second braking phase into Turn\,3, establishing a global minimum velocity of $52.5$~m/s at $1550$~m.
This is accompanied by a second phase of heavy front-axle braking with the associated tyre slip dropping to $-0.05$. 
Finally, the vehicle exits the turn, pushing the rear tyre slip up to a peak of $0.051$ at approximately $1850$~m to maximize exit speed out to the final corner.
These non-smooth matched state transitions confirm that the $hp$-adaptive collocation algorithm successfully resolves the friction saturation boundaries under surface-related differential-geometric constraints.

\section{Conclusions} \label{sec:conclusions}
This article provides an expository overview of classical results in differential geometry essential for analyzing vehicle dynamics and optimal trajectories on curved road surfaces. The transformations established in \eqref{eq:LinEqnA}, \eqref{eq:AngVel}, \eqref{eq:AngAcc} and \eqref{eq:GravF} demonstrate that tyre forces, aerodynamic loads, and rigid-body mechanics are best expressed in body-fixed coordinates, whereas the underlying road geometry is most conveniently mapped via curvilinear track coordinates. Classical flat-surface models are readily recovered by setting the surface metric tensor to the Kronecker delta $\delta_{ij}$ and the shape operator to zero. 

While several comprehensive mathematical texts cover classical differential geometry, Shifrin~\cite{Shifrin2006} offers an accessible entry point that balances mathematical rigor with engineering utility. In order to appreciate the full physical picture, these differential-geometric constraints must be integrated directly into multibody vehicle mechanics models. The numerical trajectory results presented over the Darlington Raceway terrain model demonstrate that road curvature, position-dependent variations in gravitational forcing, and normal force fluctuations generate dynamically significant variations in local tyre grip capacities that cannot be routinely ignored in high-performance or autonomous vehicle simulations. While simple 3D topologies, such as purely undulating straight roads, may be adequate in highly restricted contexts, they are far too specialized for the demands of Formula One and NASCAR track modelling. The differential-geometric mapping engine presented here is general and mathematically robust enough to accommodate advanced multi-variable constraints, paving the way for future studies focusing on joint trajectory and hybrid powertrain power flow optimization for the current 2026 Formula One technical regulations.

\section*{Data Availability Statement}
The authors confirm that vehicle and tyre parameters, and pseudospectral optimization mesh criteria necessary to replicate the simulations are detailed in Appendix\,\ref{App:C}. The complete underlying MATLAB optimization scripts, track-modelling algorithms, and high-density LiDAR road surface profile data for Darlington Raceway are available from the corresponding author upon reasonable request. A parametric track-surface model was developed using the methods described in \cite{Limebeer2025}.

%%%%%%%%%%%%%%%%%%%%%%%%%%%%%%%%%%%%%%%%%%%%%%%%%%%%%%%%%%%%%%%%%%%%%%

%\twocolumn
\bibliographystyle{asmejour}   %% .bst file that follows ASME journal format. Do not change.
\bibliography{References} %% <=== change this to name of your bib file

\appendix
\section{Computation of a geodesic curve} \label{App:Geo}
MATLAB code used to compute a geodesic curve between points $(s_A,\,n_A)$ and $(s_B,\,n_B)$.
%\begin{center}
% (lstinputlisting) Figures/GeodesicCurves.m
\begin{lstlisting}
%% Appendix A. Computation of a Geodesic Curve
% MATLAB code used to compute a geodesic curve between points (sA, nA) and (sB, nB).
%% Elliptic cone geodesics (ruled / developed representation)
clearvars;
close all;
clc;
%% 1. Geometric & Boundary Parameters
% Define the spatial dimensions of the elliptic cone and the coordinates 
% of the target points in the local orthogonal (s, n) coordinate system.
a   = 1.5;
b   = 1.0;
c   = 0.9;             % Cone geometry parameters (semi-axes and height scaling)
sA  = 1.0;  nA = 1.0;  % Geodesic start point coordinates on the manifold
sB  = -1.0; nB = -1.4; % Geodesic end point coordinates on the manifold

%% 2. Mapping theta -> phi via the Developability Condition
% The elliptic cone is a developable surface (Gaussian curvature K = 0 everywhere 
% except at the vertex). It can be unrolled into a flat Euclidean plane without 
% stretching. This mapping requires integrating the differential relationship 
% dphi = (sqrt(EG - F^2) / E) dtheta, which preserves isometric lengths.
% This maps the physical azimuthal angle (theta) to the flattened plane angle (phi).
% See Eq. (53) in the main text for the analytic derivation.
integrand = @(t) sqrt(a^2*b^2 + c^2*(a^2*sin(t).^2 + b^2*cos(t).^2)) ...
    ./ (a^2*cos(t).^2 + b^2*sin(t).^2 + c^2);
n_pre       = 5000; % Resolution for precomputing the transcendental mapping array
theta_pre   = linspace(0, 2*pi, n_pre);
phi_pre     = zeros(size(theta_pre));
for i       = 1:n_pre
    phi_pre(i) = integral(integrand, 0, theta_pre(i), 'AbsTol', 1e-12);
end
phi_total   = phi_pre(end);    % Total angular span of the unfolded cone sector in 2D
deficit     = 2*pi - phi_total; % Angular deficit characterizing the vertex singularity

fprintf('========== CONE PROPERTIES ==========\n');
fprintf('a = %.2f, b = %.2f, c = %.2f\n', a, b, c);
fprintf('Total flat angle = %.1f deg\n', rad2deg(phi_total));
fprintf('Deficit angle = %.1f deg\n', rad2deg(deficit));

% Construct function handles for fast bi-directional interpolation.
% This maintains mappings strictly within the principal topological interval [0, 2*pi].
phi_from_theta = @(theta) interp1(theta_pre, phi_pre, theta, 'linear', 0);
theta_from_phi = @(phi) interp1(phi_pre, theta_pre, phi, 'linear', 0);

%% 3. Geodesic Computation on the Flattened Pattern
% By definition, a geodesic (shortest path) on a developed surface unrolls 
% into a simple straight line in the 2D flattened coordinate space (U, V).
% First, transform the 3D boundary coordinates into the 2D isometric plane:
[UA, VA] = flatten_sn(sA, nA, phi_from_theta, a, b, c);
[UB, VB] = flatten_sn(sB, nB, phi_from_theta, a, b, c);

% Generate a linear parameterized tracking array connecting the unrolled points:
n_geo   = 300;
t       = linspace(0, 1, n_geo).';
U_geo   = UA + t*(UB - UA);
V_geo   = VA + t*(VB - VA);

% Pre-allocate coordinate arrays to map the flat straight line back onto the 3D surface:
s_geo       = zeros(size(U_geo));
n_geo_arr   = zeros(size(U_geo));
z_geo       = zeros(size(U_geo));

for i = 1:numel(U_geo)
    % Inverse-map the isometric 2D points (U, V) back to curvilinear (s, n) coordinates:
    [s_geo(i), n_geo_arr(i)] = unflatten_point(...
        U_geo(i), V_geo(i), theta_from_phi, phi_total, a, b, c);
    % Calculate the extrinsic vertical coordinate z from the surface constraint equation:
    z_geo(i) = c * sqrt(s_geo(i)^2/a^2 + n_geo_arr(i)^2/b^2);
end
% The true spatial geodesic distance equals the Euclidean distance in the isometric unrolled plane.
fprintf('\nGeodesic distance = %.4f\n', hypot(UB - UA, VB - VA));

%% 4. Graphical Layout & Reference Rendering Setup
s_max = 2.5;
n_max = 2.5;
R_max = min(s_max/a, n_max/b); % Establish scaling boundary for visualization

% Trace the continuous exterior perimeter of the flattened pattern boundary:
n_boundary  = 2000;
theta_b     = linspace(0, 2*pi, n_boundary);
[U_b, V_b]  = flatten_curve(theta_b, R_max, phi_from_theta, a, b, c);

% Cut Edges: The cone must be sliced along a single ray (seam) to flatten it.
% The ray theta = 0 maps to phi = 0, while the ray theta = 2*pi maps to the bounding ray phi_total.
% These two lines are physically coincident on the 3D manifold but distinct in the 2D unrolled plane.
[U_cut1, V_cut1] = cut_edge(0, R_max, 0, a, b, c);
[U_cut2, V_cut2] = cut_edge(2*pi, R_max, phi_total, a, b, c);

% Map structural interior rulings (straight line generators radiating from the vertex)
n_rulings   = 24;
theta_r     = linspace(0, 2*pi, n_rulings + 1);
theta_r     = theta_r(2:end-1); % Exclude the overlapping cut boundaries
[U_r, V_r]  = flatten_curve(theta_r, R_max, phi_from_theta, a, b, c);

s_r3d = a * R_max * cos(theta_r);
n_r3d = b * R_max * sin(theta_r);
z_r3d = c * R_max * ones(size(theta_r));

% Generate horizontal cross-sections (conic ellipses perpendicular to the main cone axis)
n_horiz     = 5;
R_horiz     = linspace(R_max/3, R_max, n_horiz);
theta_h     = linspace(0, 2*pi, 300);
U_h_cell    = cell(n_horiz, 1);   V_h_cell = cell(n_horiz, 1);
s_h_cell    = cell(n_horiz, 1);   n_h_cell = cell(n_horiz, 1);   z_h_cell = cell(n_horiz, 1);

for k = 1:n_horiz
    Rk = R_horiz(k);
    [U_h_cell{k}, V_h_cell{k}] = flatten_curve(theta_h, Rk, phi_from_theta, a, b, c);
    s_h_cell{k} = a * Rk * cos(theta_h);
    n_h_cell{k} = b * Rk * sin(theta_h);
    z_h_cell{k} = c * Rk * ones(size(theta_h));
end

%% 5. Create Figure Windows
figure('Position', [50, 50, 1400, 550]);
clf;

% --- LEFT PANEL: 3D Surface Rendering ---
subplot(1, 2, 1);
hold on;
[R_mesh, Th_mesh] = meshgrid(linspace(0, R_max, 40), linspace(0, 2*pi, 60));
S_mesh = a * R_mesh .* cos(Th_mesh);
N_mesh = b * R_mesh .* sin(Th_mesh);
Z_mesh = c * R_mesh;

surf(S_mesh, N_mesh, Z_mesh, ...
    'FaceAlpha', 0.25, 'EdgeColor', 'k', ...
    'EdgeAlpha', 0.15, 'FaceColor', [0.6 0.7 0.9]);

% Superimpose geometric components on the 3D cone
for i = 1:numel(theta_r)
    plot3([0, s_r3d(i)], [0, n_r3d(i)], [0, z_r3d(i)], 'k-', 'LineWidth', 1);
end
plot3([0, a*R_max], [0, 0], [0, c*R_max], 'g--', 'LineWidth', 2); % The physical cut line seam

for k = 1:n_horiz
    plot3(s_h_cell{k}, n_h_cell{k}, z_h_cell{k}, 'r-', 'LineWidth', 1);
end

% Plot the calculated spatial geodesic path and target points
plot3(s_geo, n_geo_arr, z_geo, 'b-', 'LineWidth', 2.5);
zA = c * sqrt(sA^2/a^2 + nA^2/b^2);
zB = c * sqrt(sB^2/a^2 + nB^2/b^2);
plot3(sA, nA, zA, 'bo', 'MarkerSize', 8, 'MarkerFaceColor', 'c');
plot3(sB, nB, zB, 'ro', 'MarkerSize', 8, 'MarkerFaceColor', 'r');
plot3(0, 0, 0, 'ko', 'MarkerSize', 6, 'MarkerFaceColor', 'k');
hold off;

xlabel('$s$', 'Interpreter', 'latex', 'FontSize', 20);
ylabel('$n$', 'Interpreter', 'latex', 'FontSize', 20);
zlabel('$z$', 'Interpreter', 'latex', 'FontSize', 20);
view(65, 30); axis equal; grid minor; box on;
xlim([-s_max, s_max]); ylim([-n_max, n_max]); zlim([0, c*R_max]);

% --- RIGHT PANEL: Flattened Pattern (Isometric Plane) ---
subplot(1, 2, 2);
hold on;
% Render the continuous bounding area of the unrolled sheet sector
fill([0, U_cut1, U_b, U_cut2, 0], [0, V_cut1, V_b, V_cut2, 0], ...
    [0.85 0.85 0.95], 'EdgeColor', 'none', 'FaceAlpha', 0.5);
plot(U_b, V_b, 'b-', 'LineWidth', 1.5);

for i = 1:numel(theta_r)
    plot([0, U_r(i)], [0, V_r(i)], 'k-', 'LineWidth', 0.7);
end

% Plot separated cut edges showing the split topology boundaries
plot([0, U_cut1], [0, V_cut1], 'k-', 'LineWidth', 1);
plot([0, U_cut1], [0, V_cut1], 'g--', 'LineWidth', 2);
plot([0, U_cut2], [0, V_cut2], 'k-', 'LineWidth', 1);
plot([0, U_cut2], [0, V_cut2], 'g--', 'LineWidth', 2);

for k = 1:n_horiz
    plot(U_h_cell{k}, V_h_cell{k}, 'r-', 'LineWidth', 1);
end

% Plot the isometric straight-line geodesic path segment
plot(U_geo, V_geo, 'c-', 'LineWidth', 2.5);
plot(UA, VA, 'bo', 'MarkerSize', 8, 'MarkerFaceColor', 'b');
plot(UB, VB, 'ro', 'MarkerSize', 8, 'MarkerFaceColor', 'r');
plot(0, 0, 'ko', 'MarkerSize', 6, 'MarkerFaceColor', 'k');
hold off;

xlabel('$U$', 'Interpreter', 'latex', 'FontSize', 18);
ylabel('$V$', 'Interpreter', 'latex', 'FontSize', 18);
axis equal; grid minor; box on;

%% 6. Local Transformation Functions
function [U, V] = flatten_curve(theta, R, phi_from_theta, a, b, c)
    % Maps curvilinear track boundaries directly into 2D isometric space.
    % The metric component E represents the localized scaling value (g_thetatheta)
    % mapping the metric stretching factor along the unrolling boundaries.
    phi = phi_from_theta(theta);
    E = a^2*cos(theta).^2 + b^2*sin(theta).^2 + c^2;
    r = R .* sqrt(E);
    U = r .* cos(phi);
    V = r .* sin(phi);
end

function [U, V] = cut_edge(theta, R, phi, a, b, c)
    % Handles isolated boundary mapping constraints along the cut rays without
    % triggering interpolation errors at the topological boundaries (0 and 2*pi).
    E = a^2*cos(theta)^2 + b^2*sin(theta)^2 + c^2;
    r = R * sqrt(E);
    U = r * cos(phi);
    V = r * sin(phi);
end

function [U, V] = flatten_sn(s, n, phi_from_theta, a, b, c)
    % Converts physical surface coordinates (s, n) into the intermediate configuration
    % parameters (R, theta) before projecting them into the isometric flat plane.
    R = sqrt(s^2/a^2 + n^2/b^2);
    theta = atan2(n/b, s/a);
    if theta < 0
        theta = theta + 2*pi;
    end
    [U, V] = flatten_curve(theta, R, phi_from_theta, a, b, c);
end

function [s, n] = unflatten_point(U, V, theta_from_phi, phi_total, a, b, c)
    % Maps flat coordinates back onto the 3D cone surface.
    % Utilizes modular mapping transformations to ensure points crossing the cut
    % edge wrap correctly around the periodic manifold.
    r   = hypot(U, V);
    phi = atan2(V, U);
    if phi < 0
        phi = phi + 2*pi;
    end
    phi     = mod(phi, phi_total);
    theta   = theta_from_phi(phi);
    E       = a^2*cos(theta)^2 + b^2*sin(theta)^2 + c^2;
    R       = r / sqrt(E);
    s       = a * R * cos(theta);
    n       = b * R * sin(theta);
end
\end{lstlisting}
%\end{center}

\section{Dynamics on an elliptic cone} \label{App:DynEllip}
MATLAB code used to compute the dynamics of a unit mass on the interior surface of an elliptic cone.

% (lstinputlisting) Figures/DynamicsofAParticle.m
\begin{lstlisting}
%% Appendix B. Dynamics on an Elliptic Cone
% MATLAB code used to compute the dynamics of a unit mass on the interior surface of an elliptic cone.
%% Dynamics with gravity on elliptic cone z = c * sqrt((s/a)^2 + (n/b)^2)
clearvars;
close all;
clc;

%% 1. Configuration Space & Initial State Parameters
a = 1.5; b = 1.0; c = 1.0; % Surface geometry scaling factors
g_grav = 9.81;             % Acceleration due to gravity (SI units: m/s^2)

% Define initial state matrix: [s0; n0; s_dot0; n_dot0]
s0 = 1.0; n0 = 1.5; ds0 = 0.8; dn0 = 0.0;
tspan   = [0 25];            % Simulation temporal window (seconds)
y0      = [s0; n0; ds0; dn0];

%% 2. Numerical Integration Setup
% Evaluates the structural ordinary differential equations (ODEs) derived natively 
% from the Euler-Lagrange framework mapped across the Riemannian metric.
[t, y] = ode45(@(t,y) geodesic(t, y, a, b, c, g_grav), tspan, y0);

% Extract system coordinates and generalized velocity state paths
s   = y(:,1);  n = y(:,2);
ds  = y(:,3); dn = y(:,4);

%% 3. Technical Visualizations & Figures
% --- FIGURE 1: 3D Manifold Surface with Spatial Trajectory ---
figure(1); clf;
S = linspace(-2.5, 2.5, 30);
N = linspace(-2.5, 2.5, 30);
[Sg, Ng] = meshgrid(S, N);

% Reconstruct the extrinsic vertical constraint equation z = f(s,n)
Zg = c * sqrt((Sg/a).^2 + (Ng/b).^2);
surf(Sg, Ng, Zg, 'EdgeColor', 'k', 'FaceAlpha', 0.5); hold on;

% Project the integrated state path directly onto the 3D surface manifold
z_traj = c * sqrt((s/a).^2 + (n/b).^2);
plot3(s, n, z_traj, 'k-', 'LineWidth', 2);
plot3(s(1), n(1), z_traj(1), 'bo', 'MarkerSize', 8, 'MarkerFaceColor', 'b');     % Trajectory Start Anchor
plot3(s(end), n(end), z_traj(end), 'ro', 'MarkerSize', 8, 'MarkerFaceColor', 'r'); % Trajectory End Anchor

xlabel('$s$', 'Interpreter', 'latex', 'FontSize', 18);
ylabel('$n$', 'Interpreter', 'latex', 'FontSize', 18);
zlabel('$z$', 'Interpreter', 'latex', 'FontSize', 18);
view(60, 55); grid minor; box on;

% --- FIGURE 2: System Phase Portraits ---
figure(2); clf;
% Left Panel: s-coordinate Phase Portrait
subplot(1, 2, 1);
plot(s, ds, 'b-', 'LineWidth', 1.5); hold on;
plot(s(1), ds(1), 'bo', 'MarkerSize', 8, 'MarkerFaceColor', 'b');
plot(s(end), ds(end), 'ro', 'MarkerSize', 8, 'MarkerFaceColor', 'r');
xlabel('$s$', 'Interpreter', 'latex', 'FontSize', 18);
ylabel('$\dot{s}$', 'Interpreter', 'latex', 'FontSize', 18);
grid minor; box on;

% Right Panel: n-coordinate Phase Portrait
subplot(1, 2, 2);
plot(n, dn, 'r-', 'LineWidth', 1.5); hold on;
plot(n(1), dn(1), 'bo', 'MarkerSize', 8, 'MarkerFaceColor', 'b');
plot(n(end), dn(end), 'ro', 'MarkerSize', 8, 'MarkerFaceColor', 'r');
xlabel('$n$', 'Interpreter', 'latex', 'FontSize', 18);
ylabel('$\dot{n}$', 'Interpreter', 'latex', 'FontSize', 18);
grid minor; box on;

%% 4. Core ODE Geometric Mechanics Engine
function dydt = geodesic(~, y, a, b, c, g_grav)
    % Extract generalized positions (q) and velocities (q_dot)
    s = y(1);  n = y(2);
    ds = y(3); dn = y(4);
    
    % Radial parameter regularizing the apex to prevent coordinate singularity division errors
    r = max(sqrt((s/a)^2 + (n/b)^2), 1e-8); 
    
    %% 4.1 Surface Gradient Calculations
    % Compute partial tangents mapping the surface slope (z_s and z_n)
    fs = c*s / (a^2 * r);
    fn = c*n / (b^2 * r);
    
    %% 4.2 First Fundamental Form (Metric Tensor Matrix: [E, F; F, G])
    % Redefines the inner product rules natively on the smooth surface manifold.
    E = 1 + fs^2;
    F = fs * fn;
    G = 1 + fn^2;
    
    %% 4.3 Second-Order Curvature Sensitivities
    % Calculate analytic mixed partial derivatives using the chain rule
    dfs_ds =  c*n^2 / (a^2 * b^2 * r^3);
    dfs_dn = -c*s*n / (a^2 * b^2 * r^3);
    dfn_ds = dfs_dn; % Enforce Clairaut's theorem (symmetry of mixed partials)
    dfn_dn =  c*s^2 / (a^2 * b^2 * r^3);
    
    % Spatial gradients of the metric tensor elements w.r.t the local coordinates
    dE_ds = 2*fs*dfs_ds;            dE_dn = 2*fs*dfs_dn;
    dF_ds = dfs_ds*fn + fs*dfn_ds;  dF_dn = dfs_dn*fn + fs*dfn_dn;
    dG_ds = 2*fn*dfn_ds;            dG_dn = 2*fn*dfn_dn;
    
    %% 4.4 Christoffel Symbols computation (Connection Matrix Coefficient Groups)
    % The metric matrix is inverted and multiplied against the first-kind symbol array M
    % to yield the structural connection coefficients Gamma^k_ij (Symbols of the Second Kind).
    M = [0.5*dE_ds,         0.5*dE_dn,         dF_dn - 0.5*dG_ds;
         dF_ds - 0.5*dE_dn, 0.5*dG_ds,         0.5*dG_dn];
    
    Gamma = [E, F; F, G] \ M; % Efficient matrix left-division to raise the indices
    
    % Component mapping matching the structural geometric acceleration equations:
    Gs_ss = Gamma(1,1); Gs_sn = Gamma(1,2); Gs_nn = Gamma(1,3); % Coefficients driving s-acceleration
    Gn_ss = Gamma(2,1); Gn_sn = Gamma(2,2); Gn_nn = Gamma(2,3); % Coefficients driving n-acceleration
    
    %% 4.5 Covariant Gravitational Force Vector
    % The conservative potential energy V = m*g*z yields a covariant gradient vector.
    % Left-multiplying by the inverse metric raises the index to resolve contravariant forcing.
    force   = -[E, F; F, G] \ (g_grav * [fs; fn]);
    force_s = force(1);
    force_n = force(2);
    
    %% 4.6 Final Equations of Motion (Geometric Form)
    % Formulated as: q_ddot^k = - Gamma^k_ij * q_dot^i * q_dot^j + Force^k
    dds = -(Gs_ss*ds^2 + 2*Gs_sn*ds*dn + Gs_nn*dn^2) + force_s;
    ddn = -(Gn_ss*ds^2 + 2*Gn_sn*ds*dn + Gn_nn*dn^2) + force_n;
    
    dydt = [ds; dn; dds; ddn];
end
\end{lstlisting}

%%%%%%%%%%%%%%%%%%%%%%%%%%%%%%%%%%%%%%%%%%%%%%%%%%%%%%%%%%%%%%%%%%%%%%
\section{Numerical Optimization Framework and Vehicle Parameters} \label{App:C}
\label{app:simulation_parameters}
This appendix compiles the structural dimensions, coordinate scaling matrices, localized surface tyre constants, and discrete mesh tracking parameters implemented within the optimal control trajectory algorithms detailed in Section~8. 

The baseline mechanical reference properties and scaling transformation variables extracted directly from the execution scripts are compiled in Table~\ref{tab:car_parameters}. The load-dependent tyre scaling metrics mapped within the contact patch force models are explicitly detailed in Table~\ref{tab:tyre_parameters}, and the numerical configuration criteria governing the Radau Pseudospectral transcription process are detailed in Table~\ref{tab:gpops_parameters}.

%=====================================================================
% TABLE 1: CAR PARAMETERS
%=====================================================================
\begin{table}[htbp]
\caption{Rigid-body physical parameters and scaling definitions of the vehicle model}
\label{tab:car_parameters}
\centering
\small
\begin{tabular}{llll}
\hline
\textbf{Parameter Definition} & \textbf{Symbol} & \textbf{Physical Value} & \textbf{Non-Dimensional Expression} \\ \hline
Reference Wheelbase Base Length & $w_0$ & $2.79$~m & $w_0 \cdot \text{lengthscale} = 1.0$ \\
Total Vehicle Mass & $M_0$ & $1550$~kg & $M_0 \cdot \text{massscale} = 1.0$ \\
Acceleration Due to Gravity & $g_0$ & $9.81$~$\text{m/s}^2$ & $g_0 \cdot \text{accscale} = 1.0$ \\
Distance from CG to Front Axle & $a$ & $1.32$~m & $1.32 \cdot \text{lengthscale} \approx 0.4731$ \\
Distance from CG to Rear Axle & $b$ & $1.47$~m & $1.47 \cdot \text{lengthscale} \approx 0.5269$ \\
Center of Gravity Height & $h$ & $0.38$~m & $0.38 \cdot \text{lengthscale} \approx 0.1362$ \\
Pitch Moment of Inertia & $I_y$ & $1200$~$\text{kg}\cdot\text{m}^2$ & $1200 \cdot \text{lengthscale}^2 \cdot \text{massscale} \approx 0.1541$ \\
Yaw Moment of Inertia & $I_z$ & $2500$~$\text{kg}\cdot\text{m}^2$ & $2500 \cdot \text{lengthscale}^2 \cdot \text{massscale} \approx 0.3211$ \\
Drag Aerodynamic Coefficient & $C_D$ & $-0.469$ & $-0.469$ \\
Sideforce Aerodynamic Coefficient & $C_S$ & $-0.169$ & $-0.169$ \\
Lift Aerodynamic Coefficient & $C_L$ & $1.090$ & $1.090$ \\
Pitching Moment Coefficient & $C_{PM}$ & $-0.013$ & $-0.013$ \\
Yawing Moment Coefficient & $C_{YM}$ & $0.037$ & $0.037$ \\
Ambient Air Density & $\rho$ & $1.156$~$\text{kg/m}^3$ & $1.156 \cdot \text{massscale}/\text{lengthscale}^3 \approx 16.273$ \\
Frontal Projected Reference Area & $A_f$ & $2.24$~$\text{m}^2$ & $2.24 \cdot \text{lengthscale}^2 \approx 0.2878$ \\
Moment Arm Reference Metric & $M_{\text{arm}}$ & $2.794$~m & $2.794 \cdot \text{lengthscale} \approx 1.0014$ \\ \hline
\end{tabular}
\end{table}

%=====================================================================
% TABLE 2: TYRE PARAMETERS
%=====================================================================
\begin{table}[htbp]
\caption{Front and rear tyre patch performance parameter specifications \cite{Perantoni_2014}}
\label{tab:tyre_parameters}
\centering
\small
\begin{tabular}{lll}
\hline
\textbf{Parameter Definition} & \textbf{Front Tyre (Value)} & \textbf{Rear Tyre (Value)} \\ \hline
First Reference Normal Load, $F_{z1}$ (N)   & $-5184.31$ & $-6175.15$ \\
Second Reference Normal Load, $F_{z2}$ (N)  & $-14632.51$ & $-13979.55$ \\
Longitudinal Friction Peak 1, $\mu_{x\max1}$ (dimensionless) & $1.2178$ & $1.2002$ \\
Longitudinal Friction Peak 2, $\mu_{x\max2}$ (dimensionless) & $1.0486$ & $1.0604$ \\
Slip Ratio Limit at Peak 1, $\kappa_{\max1}$ (dimensionless)  & $0.0847$ & $0.0818$ \\
Slip Ratio Limit at Peak 2, $\kappa_{\max2}$ (dimensionless)  & $0.0595$ & $0.0609$ \\
Lateral Friction Peak 1, $\mu_{y\max1}$ (dimensionless)     & $1.1252$ & $1.1040$ \\
Lateral Friction Peak 2, $\mu_{y\max2}$ (dimensionless)     & $0.8581$ & $0.8767$ \\
Slip Angle Limit at Peak 1, $\alpha_{\max1}$ (rad)           & $0.0863$ & $0.0806$ \\
Slip Angle Limit at Peak 2, $\alpha_{\max2}$ (rad)           & $0.0480$ & $0.0585$ \\
Pacejka Shape Curvature Exponent, $Q_x$ (dimensionless)      & $1.4676$ & $1.4661$ \\
Pacejka Shape Curvature Exponent, $Q_y$ (dimensionless)      & $1.2865$ & $1.3618$ \\
Pacejka Scaling Factor, $S_x$ (dimensionless)                & $1.4111$ & $1.4116$ \\
Pacejka Scaling Factor, $S_y$ (dimensionless)                & $1.4518$ & $1.4310$ \\ \hline
\end{tabular}
\end{table}

%=====================================================================
% TABLE 3: GPOPS PARAMETERS
%=====================================================================
\begin{table}[htbp]
\caption{GPOPS-II/IPOPT numerical optimization configuration parameters}
\label{tab:gpops_parameters}
\centering
\small
\begin{tabular}{ll}
\hline
\textbf{Configuration Parameter} & \textbf{Value / Assigned Setting} \\ \hline
Transcription Method             & Radau Pseudospectral Method (RPM-Differentiation) \\
NLP Solver Block                 & IPOPT \\
Linear Solver Engine             & MA57 \\
Maximum Internal Iterations      & 2000 \\
Derivative Supplier Utility      & ADiGator (Algorithmic Differentiation) \\
Derivative Order Level           & Second-order analytical sensitivity tracking \\
Mesh Refinement Logic            & $hp$-PattersonRao adaptive strategy \\
Mesh Tolerance Convergence       & $1.0 \times 10^{-5}$ (maximum relative error) \\
Initial Mesh Segments ($N$)      & 20 identical localized intervals \\
Collocation Points Per Segment   & 4 points (uniform distribution array) \\
Point Boundaries ($N_{\min}, N_{\max}$) & Minimum: 4, Maximum: 15 collocation nodes \\
Max Refinement Passes            & 20 refinement iterations \\
Scaling Control Protocol         & Automatic-bounds variable mapping \\ \hline
\end{tabular}
\end{table}
\end{document}